\def\hsymbu#1{\smash{\lower1.7ex\hbox{\huge$#1$}}}
\def\E{{E}}
\def\eqnarray{\stepcounter {equation}\let \@currentlabel =\theequation
\global \@eqnswtrue
\global \@eqcnt \z@ \tabskip \@centering \let \\=\@eqncr
$$\halign to \displaywidth \bgroup \@eqnsel \hskip \@centering
$\displaystyle \tabskip \z@ {##}$&\global \@eqcnt \@ne \hfil
${\mbox{}##\mbox{}}$\hfil &\global \@eqcnt \tw@
$\displaystyle \tabskip \z@ {##}$\hfil \tabskip \@centering
&\llap {##}\tabskip \z@ \cr}
\begin{document}

\theoremstyle{plain}
\newtheorem{lemma}{Lemma}[section]
\theoremstyle{remark}
\newtheorem{remark}{Remark}[section]
\theoremstyle{example}
\newtheorem{example}{Example}[section]
\theoremstyle{lemma}
\newtheorem{prop}{Proposition}[section]

 \newcommand{\hirose}[1]{\textcolor{red}{#1}}
 \newcommand{\hiroseb}[1]{\textcolor{blue}{#1}}
 \newcommand{\hiroseu}[1]{\sout{\textcolor{red}{#1}}}

{
\begin{center}
\textbf{\Large Simple structure estimation via prenet penalization}
\end{center}
\begin{center}
\large {Kei Hirose$^{1,3}$ and Yoshikazu Terada$^{2,3}$ 
}
\end{center}

\begin{center}
{\it {\small
$^1$ Institute of Mathematics for Industry, Kyushu University,\\
744 Motooka, Nishi-ku, Fukuoka 819-0395, Japan \\

\vspace{1.2mm}

$^2$ Division of Mathematical Science for Social Systems, Graduate School of Engineering Science, Osaka University,\\
1-3, Machikaneyama-cho, Toyonaka, Osaka 560-8531, Japan \\

\vspace{1.2mm}

$^3$ RIKEN Center for Advanced Intelligence Project, 1-4-1 Nihonbashi, Chuo-ku, Tokyo 103-0027, Japan \\
}}
{\it {\small E-mail: hirose@imi.kyushu-u.ac.jp, terada@sigmath.es.osaka-u.ac.jp
}}
\end{center}

\vspace{1.5mm}

\begin{abstract}
We propose a {\it prenet} ({\it pr}oduct {\it e}lastic {\it net}), which is a new penalization method for factor analysis models.  The penalty is based on the product of a pair of elements in each row of the loading matrix.  The prenet not only shrinks some of the factor loadings toward exactly zero, but also enhances the simplicity of the loading matrix, which plays an important role in the interpretation of the common factors.  In particular, with a large amount of prenet penalization, the estimated loading matrix possesses a perfect simple structure, which is known as a desirable structure in terms of the simplicity of the loading matrix.  Furthermore, the perfect simple structure estimation via the prenet turns out to be a generalization of the $k$-means clustering of variables.  On the other hand, a mild amount of the penalization approximates a loading matrix estimated by the quartimin rotation, one of the most commonly used oblique rotation techniques.  Thus, the proposed penalty bridges a gap between the perfect simple structure and the quartimin rotation.  Monte Carlo simulation is conducted to investigate the performance of the proposed procedure.  Three real data analyses are given to illustrate the usefulness of our penalty.
 \end{abstract}
 \noindent {\bf Key Words}: Quartimin rotation, Penalized likelihood factor analysis, Perfect simple structure, Sparse estimation

 \section{Introduction}
 Factor analysis investigates the correlation structure of high-dimensional observed variables by construction of a small number of latent variables called common factors.  Factor analysis can be considered as a soft clustering of variables, in which each factor corresponds to a cluster and observed variables are categorized into overlapping clusters.  For interpretation purposes,  it is desirable for the observed variables to be well-clustered \citep{yamamoto2013cluster}.  In particular, the perfect simple structure (e.g., \citealp{bernaards2003orthomax,jennrich2004rotation}), wherein each row of the loading matrix has at most one nonzero element, provides a non-overlapping clustering of variables in the sense that variables that correspond to nonzero elements of the $j$th column of the loading matrix belong to the $j$th cluster.

Conventionally, a well-clustered structure of the loading matrix is found by rotation techniques, such as the varimax rotation \citep{kaiser1958varimax} and the promax rotation \citep{hendrickson1964promax}.  The problem with the rotation technique is that it cannot produce a sufficiently sparse solution in some cases \citep{hirose2015sparse}, because the loading matrix must be found among a set of unpenalized maximum likelihood estimates.  To obtain sparser solutions than the factor rotation, we employ a penalization method. It is shown that the penalization is a generalization of the rotation techniques, and can produce sparser solutions than the rotation methods \citep{hirose2015sparse}.   Typically, many researchers use the $L_1$-type penalization, such as the lasso \citep{Tibshirani:1996}, the adaptive lasso \citep{Zou:2006}, and the minimax concave penalty (e.g., \citealp{Zhang:2010}).   Examples include \citet{Choietal:2011,Ningetal:2011,srivastava2014expandable,hirose2015sparse,Trendafilov2017}. The $L_1$ penalization shrinks some of the factor loadings toward exactly zero, which might produce a more interpretable loading matrix.

However, the $L_1$ penalization procedures introduce two fundamental issues.  First, the lasso-type sparse estimation is not guaranteed to produce a well-clustered structure of the loading matrix simply because it is sparse.  For example, with the lasso, a great amount of penalization leads to a zero matrix, which implies there are no cluster structures.  Even when an appropriate value of the tuning parameter is given, the estimated loading matrix is not guaranteed to possess the well-clustered structure, such as perfect simple structure.  The second issue is that the $L_1$ penalization cannot often approximate a true loading matrix when it is not sufficiently sparse; with the lasso, some of the factor loadings whose true values are close---but not very close---to zero are estimated as zero values, and this misspecification can often cause a significant negative effect on the estimation of other factor loadings \citep{hirose2014estimation}.  

To handle the above issues, we propose a {\it prenet} ({\it pr}oduct {\it e}lastic {\it net}) penalty, which is based on the product of a pair of parameters in each row of the loading matrix.  A remarkable feature of the prenet is that a large amount of penalization leads to the perfect simple structure.  The existing $L_1$-type penalization methods do not have that significant property.  Furthermore, the perfect simple structure estimation via the prenet penalty is shown to be a generalization of the $k$-means variables clustering.  On the other hand, with a mild amount of prenet penalization, the estimated loading matrix is approximated by that obtained using the quartimin rotation, a widely used oblique rotation method.  The quartimin criterion can often estimate a non-sparse loading matrix appropriately, so that the second problem of the lasso-type penalization mentioned above is addressed.  We employ the generalized expectation and maximization (GEM) algorithm and the coordinate descent algorithm (e.g., \citealp{Friedmanetal:2010}) to obtain the prenet estimator.  The proposed algorithm monotonically decreases the objective function at each iteration.  The performance of the prenet penalization is investigated through the Monte Carlo simulation.  We apply the proposed method to three datasets: personality data (big 5 data), handwritten digits data, and resting-state fMRI data.  The proposed procedure is available for use in the {\tt R} package {\tt fanc}, which is available at \url{http://cran.r-project.org/web/packages/fanc}.

The remainder of this paper is organized as follows. Section 2 describes the estimation of the factor analysis model via penalization.  In Section 3, we introduce the prenet penalty and provide an illustrative example.  Section 4 describes several properties of the prenet penalty, including its relationship with the quartimin criterion. Section 5 presents an estimation algorithm, which is based on the GEM and coordinate descent algorithms, to obtain the prenet solutions.  In Section 6, we conduct a Monte Carlo simulation to investigate the performance of the prenet penalization.  Section 7 illustrates the usefulness of our proposed procedure through three real data analyses.  Section 8 discusses the results and concludes.

\section{Estimation of the factor model via the penalization method}
Let $\bm{X} = (X_1,\dots,X_p)^T$ be a $p$-dimensional observed random vector with mean vector $\bm{0}$ and variance--covariance matrix $\bm{\Sigma}$.  The factor analysis model is
\begin{equation*}
\bm{X} = \bm{\Lambda} \bm{F}+\bm{\varepsilon}, \label{model1}
\end{equation*}
where $\bm{\Lambda} = (\lambda_{ij})$ is a $p \times m$  loading matrix, $\bm{F} = (F_1,\cdots,F_m)^T$ is a random vector of common factors, and $\bm{\varepsilon}  = (\varepsilon_1,\cdots, \varepsilon_p)^T$ is a random vector of unique factors.  It is assumed that $\E(\bm{F} ) = \bm{0}$, $\E(\bm{\varepsilon} ) = \mathbf{0}$, $\E(\bm{F}\bm{F}^T) = \bm{I}_m$, $\E(\bm{\varepsilon} \bm{\varepsilon} ^T) = \bm{\Psi}$, and $\E(\bm{F} \bm{\varepsilon} ^T) = \bm{O}$, where $\bm{I}_m$ is an identity matrix of order $m$, and $\bm{\Psi}$ is a $p \times p$ diagonal matrix whose diagonal elements are referred to as unique variances, $\psi_{i}$.  Under these assumptions, the variance--covariance matrix of observed random vector $\bm{X}$ is given by $\bm{\Sigma} = \bm{\Lambda} \bm{\Lambda}^T+\bm{\Psi}$.

Let $\bm{x}_1,\cdots,\bm{x}_n$ be $n$ observations and $\bm{S} = (s_{ij})$ be the corresponding sample covariance matrix.  We estimate the model parameter by minimizing the penalized loss function $\ell_{\rho}(\bm{\Lambda},\bm{\Psi})$ given by
\begin{equation}
	\ell_{\rho}(\bm{\Lambda},\bm{\Psi}) = \ell(\bm{\Lambda},\bm{\Psi}) + \rho P(\bm{\Lambda}), \label{eq:pf}
\end{equation}
where $\ell(\bm{\Lambda},\bm{\Psi})$ is a loss function, $P(\bm{\Lambda})$ is a penalty function, and $\rho > 0$ is a tuning parameter.  Two popular loss functions are given as follows.
\begin{description}
	\item[Quadratic loss:] A general form of the quadratic loss is given by
\begin{equation*}
	\ell_{\rm QL}(\bm{\Lambda},\bm{\Psi}) = \|\bm{\Gamma}^{-1} (\bm{S} - \bm{\Lambda}\bm{\Lambda}^T - \bm{\Psi}) \|^2, \label{GLS}
\end{equation*}
where $\bm{\Gamma}$ is an arbitrary matrix.  When $\bm{\Gamma} = \bm{I}$, $\ell_{\rm QL}(\bm{\Lambda},\bm{\Psi})$ becomes a square loss function. $\bm{\Gamma} = \bm{S}^{-1}$ results in the generalized square loss function.
	\item[Discrepancy function:] Another popular loss function is the discrepancy function
\begin{eqnarray}
\ell_{\rm ML}(\bm{\Lambda},\bm{\Psi}) = \frac{1}{2} \left\{\mathrm{tr}(\bm{\Sigma}^{-1} \bm{S}) - \log |\bm{\Sigma}^{-1}\bm{S}| - p \right\}. \label{taisuuyuudo}
\end{eqnarray}
Assume that the observations $\bm{x}_1,\cdots,\bm{x}_n$ are drawn from the $p$-dimensional normal population $N_p(\bm{\mu},\bm{\Sigma})$ with $\bm{\Sigma} = \bm{\Lambda} \bm{\Lambda}^T+\bm{\Psi}$.  The minimizer of $\ell_{\rm ML}(\bm{\Lambda},\bm{\Psi})$ is the maximum likelihood estimate.  Note that $\ell_{\rm ML}(\bm{\Lambda},\bm{\Psi})\le 0$ for any $\bm{\Lambda}$ and $\bm{\Psi}$, and $\ell_{\rm ML}(\bm{\Lambda},\bm{\Psi}) = 0$ if and only if $\bm{\Lambda}\bm{\Lambda}^T + \bm{\Psi} = \bm{S}$.
\end{description}
Hereafter, we use a discrepancy function as a loss function, unless otherwise noted. It is worth noting that our proposed penalty, described in Section \ref{sec:prenet}, can be directly applied to many other loss functions.

The factor analysis model has a rotational indeterminacy; both $\bm{\Lambda}$ and $\bm{\Lambda} \mathbf{T}$ generate the same covariance matrix $\bm{\Sigma}$, where $\bm{T}$ is an arbitrary orthogonal matrix.  Thus, when $\rho = 0$, the solution that minimizes (\ref{eq:pf}) is not uniquely determined.  However, when $\rho > 0$, the solution may be uniquely determined when an appropriate penalty $P(\bm{\Lambda})$ is chosen.  An example is the lasso penalty \citep{Tibshirani:1996}, given by
$
		P(\bm{\Lambda}) = \sum_{i = 1}^p\sum_{j = 1}^m|\lambda_{ij}|. \label{lasso penalty}
$
The lasso-type penalization produces a sparse solution, that is, some of the estimates of factor loadings become exactly zero.  

The penalty $P(\bm{\Lambda})$ is referred to as separable if it is written as $P(\bm{\Lambda}) = \sum_{i = 1}^p\sum_{j = 1}^m P(|\lambda_{ij}|)$.  Many existing penalties, including the lasso, elastic net, and SCAD penalties, are separable.  The most popular nonseparable penalty would be the fused lasso \citep{tibshirani2005sparsity}, in which the penalty is based on the {\it difference} of the coefficients.
\begin{remark}
	There are several latent variable models related to the standard factor model.  Here, we describe three models.
	\begin{enumerate}
		\item We can assume a factor correlation (i.e., $E[\bm{F}\bm{F}^T] = \bm{\Phi}$) and estimate it by the penalized maximum likelihood method \citep{hirose2014estimation}.
		\item The approximate factor model (e.g., \citealp{stock2002forecasting}), in which $\bm{\Psi}$ does not have to be a diagonal matrix, may be more flexible than the standard factor model.
		\item $\bm{\Psi} = \sigma^2\bm{I}$ corresponds to the probabilistic principal component analysis \citep{tipping1999probabilistic}.  This fact implies that the factor analysis is viewed as a generalization of principal component analysis.
	\end{enumerate}
Our proposed penalty, presented in Section \ref{sec:prenet}, can be directly applied to a wide variety of latent variable models, including the above three models.
\end{remark}

\section{Prenet penalty}\label{sec:prenet}
We propose the {\it prenet} ({\it pr}oduct {\it e}lastic {\it net}) penalty
\begin{equation}
P(\bm{\Lambda}) = \sum_{i = 1}^p \sum_{j = 1}^{m-1} \sum_{k > j}  \left\{  \gamma  |\lambda_{ij}||\lambda_{ik}| +  \frac{1}{2} (1-\gamma)  \lambda_{ij}^2\lambda_{ik}^2 \right\}, \label{prenet penalty}
\end{equation}
where $\gamma \in [0,1]$ is a tuning parameter.  The most significant feature of the prenet penalty is that it is based on the {\it product} of a pair of parameters.  It is shown that the prenet penalty is not separable.

When $\gamma = 0$, the prenet penalty is equivalent to the quartimin criterion \citep{carroll1953analytical}, a widely used oblique rotation criterion in factor rotation.  As is the case with the quartimin rotation, the prenet penalty in (\ref{prenet penalty}) eliminates the rotational indeterminacy and contributes significantly to the estimation of the simplicity of the loading matrix.  When $\gamma > 0$, the prenet penalty includes products of absolute values of factor loadings, producing factor loadings that are {\it exactly} zero.  Therefore, with an appropriate value of $\gamma$, the prenet penalty enhances both the simplicity and the sparsity of the loading matrix.

\subsection{Comparison with the elastic net penalty}
The prenet penalty is similar to the elastic net penalty \citep{ZouandHastie:2005}
\begin{equation}
P(\bm{\Lambda}) = \sum_{i = 1}^p \sum_{j = 1}^{m} \left\{  \gamma  |\lambda_{ij}| + \frac{1}{2} (1-\gamma)  \lambda_{ij}^2  \right\},\label{eq:enet}
\end{equation}
which is a hybrid of the lasso penalty (first term) and the ridge penalty (second term).  Although the elastic net penalty is similar to the prenet penalty, there is a fundamental difference between these two penalties; the elastic net is a separable penalty, but the prenet is based on the product of a pair of parameters.

Figure \ref{fig:penalties} shows the penalty functions of the prenet ($P(x,y) = \gamma |x||y| + (1-\gamma)x^2y^2/2$) and the elastic net ($P(x,y) = \gamma (|x|+|y|) + (1-\gamma)(x^2+y^2)/2$) when $\gamma = 0.7$.  Clearly, the prenet penalty is a nonconvex function.  A significant difference between the prenet and the elastic net is that although the prenet penalty becomes zero when {\it either} $x$ or $y$ attains zero, the elastic net penalty becomes zero only when {\it both} $x = 0$ and $y = 0$.  Therefore, for a two-factor model, either $\lambda_{i1}$ or $\lambda_{i2}$ tends to be close to zero with the prenet penalty, which leads to a perfect simple structure.  On the other hand, the elastic net tends to produce estimates in which both $\lambda_{i1}$ and $\lambda_{i2}$ are small.
 \begin{figure}[!t]
\centering
\includegraphics[width=6.0cm]{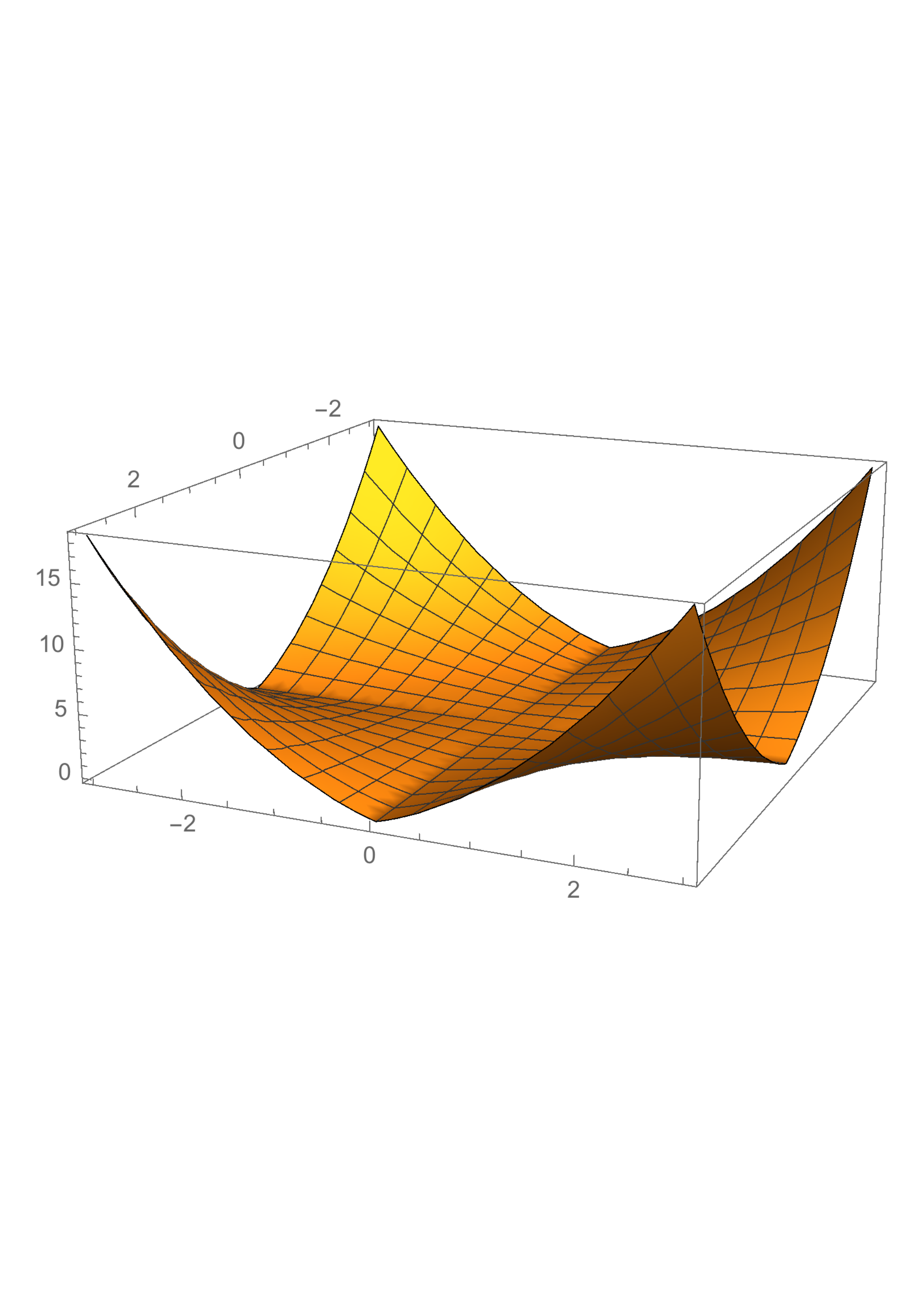}\hspace{10mm}
  \includegraphics[width=6.0cm]{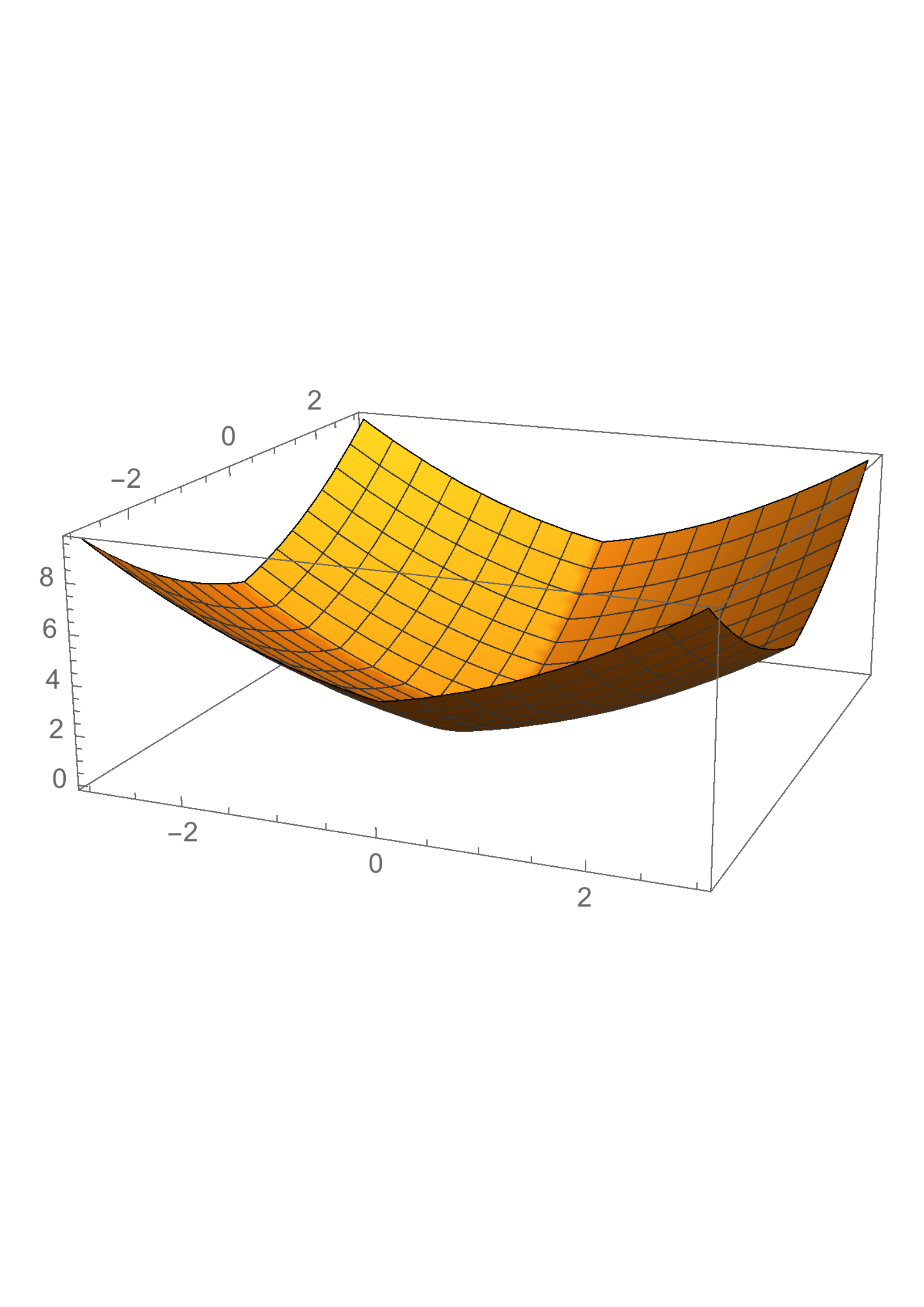}
  \caption{Penalty functions of the prenet (left-hand side) and the elastic net (right-hand side) with $\gamma = 0.7$.}	\label{fig:penalties}
\end{figure}

With the prenet penalty, the second term of (\ref{prenet penalty}) allows the estimation of the simplicity of the loading matrix.  However, the second term of the elastic net penalty in (\ref{eq:enet}) (i.e., ridge penalty) does not contribute in any way to the estimation of the simplicity of the loading matrix.  In fact, the ridge penalty can be expressed as
 \begin{equation*}
\sum_{i = 1}^p \sum_{j = 1}^{m} \lambda_{ij}^2 = {\rm tr}(\bm{\Lambda}^T\bm{\Lambda}) = {\rm tr}(\bm{\Lambda}^T \bm{T} \bm{T}^T\bm{\Lambda})
\end{equation*}
for any orthogonal matrix $\bm{T}$, which implies the rotational indeterminacy cannot be eliminated with the ridge penalty.  On the other hand, the lasso makes some of the coefficients move toward exactly zero, which leads to an interpretable loading matrix.  Nevertheless,  the sparse estimation with the lasso cannot often estimate a well-cluster structure. For example, when the true loading matrix is not sufficiently sparse, the lasso often estimates a loading matrix that is completely different from the true one \citep{hirose2014estimation}.   We provide a simple numerical example in the next Subsection to illustrate this point.

\subsection{Illustrative example}\label{sec:illustration}
Assume that the true loading matrix is
\begin{equation}
	\bm{\Lambda}_d=
	\begin{pmatrix}
		0.9 & 0.8 & 0.7 & 0.2 & 0.2 & 0.2\\
		0.2 & 0.2 & 0.2 & 0.9 & 0.8 & 0.7
	\end{pmatrix}
	^T.\label{noisy}
\end{equation}
Here, ``$d$" in $\bm{\Lambda}_d$ denotes density, because the loading matrix does not include zero values.  We construct a covariance matrix $\bm{\Sigma} = \bm{\Lambda}_d\bm{\Lambda}_d^T + \bm{\Psi}$ with $\bm{\Psi} = {\rm diag} (\bm{I} - \bm{\Lambda}_d\bm{\Lambda}_d^T)$, and then generate 50 samples from $N(\bm{0},\bm{\Sigma})$.  In many simulation studies of the factor model (e.g., \citealp{lopes2004bayesian}), some of the true factor loadings are exactly zero, as follows:
\begin{equation}
	{\bm{\Lambda}}_s =
	\begin{pmatrix}
		0.9 & 0.8 & 0.7& 0.0 & 0.0 & 0.0\\
		0.0 & 0.0 & 0.0 & 0.9 & 0.8 & 0.7
	\end{pmatrix}
	^T. \label{desirable}
\end{equation}
Here, ``$s$" in $\bm{\Lambda}_s$ denotes sparsity.  In this numerical example, we use $\bm{\Lambda}_d$ instead of $\bm{\Lambda}_s$.  This is because in many applications, some of the factor loadings can be nearly---but not exactly---zero.

With the penalization procedure, we expect that
\begin{description}
  \item[(i)]for large $\rho$, the estimated loading matrix is close to (\ref{desirable}),
  \item[(ii)]for small $\rho$, we obtain an estimate close to (\ref{noisy}).
\end{description}

Table \ref{table:illustration} shows the loading matrices estimated by the elastic net for various values of $\rho$.
\begin{table}
\caption{\label{table:illustration}Loading matrices estimated by the lasso for various values of $\rho$.}
\centering
\begin{tabular}{r|rr|rr|rr|rr|rr|rr}
  \hline
\multicolumn{1}{r}{}  &\multicolumn{6}{c}{$\gamma = 1$}&\multicolumn{6}{c}{$\gamma = 0.01$}\\
\multicolumn{1}{r}{} &\multicolumn{2}{c}{$\rho = 0.28$}&\multicolumn{2}{c}{$\rho = 0.1$}&\multicolumn{2}{c}{$\rho = 0.01$}&\multicolumn{2}{c}{$\rho = 1.0$}&\multicolumn{2}{c}{$\rho = 0.1$}&\multicolumn{2}{c}{$\rho = 0.01$}\\
\multicolumn{1}{r}{} & F1 & \multicolumn{1}{r}{F2}  & F1 & \multicolumn{1}{r}{F2} & F1 & \multicolumn{1}{r}{F2} & F1 & \multicolumn{1}{r}{F2} & F1 & \multicolumn{1}{r}{F2}& F1 & \multicolumn{1}{r}{F2} \\
  \hline
V1  & 0.63 & 0.00 & 0.74 & 0.00 & 0.85 & 0.00 & 0.52 & $-$0.00 & 0.77 & $-$0.00 & 0.86 & 0.01 \\
  V2 & 0.66 & 0.00 & 0.76 & 0.00 & 0.86 & 0.00 & 0.53 & 0.00 & 0.79 & 0.01 & 0.87 & 0.02 \\
  V3 & 0.46 & 0.00 & 0.59 & 0.04 & 0.70 & 0.08 & 0.39 & 0.06 & 0.62 & 0.08 & 0.70 & 0.10 \\
  V4 & 0.20 & 0.52 & 0.35 & 0.58 & 0.50 & 0.64 & 0.28 & 0.40 & 0.43 & 0.60 & 0.50 & 0.66 \\
  V5 & 0.09 & 0.60 & 0.24 & 0.68 & 0.38 & 0.74 & 0.21 & 0.43 & 0.32 & 0.69 & 0.38 & 0.75 \\
  V6 & 0.10 & 0.46 & 0.26 & 0.55 & 0.40 & 0.62 & 0.22 & 0.36 & 0.34 & 0.57 & 0.40 & 0.63 \\
   \hline
\end{tabular}	
\end{table}
\begin{table}
\caption{\label{table:illustration2}Loading matrices estimated by the prenet for various values of $\rho$.}
\centering
\begin{tabular}{r|rr|rr|rr|rr|rr|rr}
  \hline
\multicolumn{1}{r}{}  &\multicolumn{6}{c}{$\gamma = 1$}&\multicolumn{6}{c}{$\gamma = 0.01$}\\
\multicolumn{1}{r}{} &\multicolumn{2}{c}{$\rho = 0.4$}&\multicolumn{2}{c}{$\rho = 0.2$}&\multicolumn{2}{c}{$\rho = 0.01$}&\multicolumn{2}{c}{$\rho = 43$}&\multicolumn{2}{c}{$\rho = 0.5$}&\multicolumn{2}{c}{$\rho = 0.02$}\\
\multicolumn{1}{r}{} & F1 & \multicolumn{1}{r}{F2}  & F1 & \multicolumn{1}{r}{F2} & F1 & \multicolumn{1}{r}{F2} & F1 & \multicolumn{1}{r}{F2} & F1 & \multicolumn{1}{r}{F2}& F1 & \multicolumn{1}{r}{F2} \\
  \hline
V1 & 0.88 & 0.00 & 0.83 & 0.00 & 0.86 & 0.00 & 0.88 & 0.00 & 0.81 & 0.15 & 0.84 & 0.21 \\
  V2 & 0.87 & 0.00 & 0.85 & 0.00 & 0.88 & 0.00 & 0.87 & 0.00 & 0.82 & 0.16 & 0.85 & 0.22 \\
  V3 & 0.71 & 0.00 & 0.68 & 0.04 & 0.71 & 0.08 & 0.71 & 0.00 & 0.64 & 0.20 & 0.67 & 0.26 \\
  V4 & 0.00 & 0.83 & 0.32 & 0.64 & 0.51 & 0.65 & 0.00 & 0.83 & 0.26 & 0.72 & 0.34 & 0.76 \\
  V5 & 0.00 & 0.85 & 0.19 & 0.75 & 0.39 & 0.75 & 0.00 & 0.85 & 0.14 & 0.80 & 0.20 & 0.83 \\
  V6 & 0.00 & 0.76 & 0.22 & 0.62 & 0.40 & 0.63 & 0.00 & 0.76 & 0.18 & 0.68 & 0.25 & 0.71 \\
   \hline
\end{tabular}
\end{table}
With the lasso penalty (i.e., $\gamma = 1$), when $\rho > 0.28$, we obtain a one-factor model: the largest value that provides a two-factor model with the lasso is $\rho = 0.28$.  In this case, $\hat{\lambda}_{41}$, $\hat{\lambda}_{51}$, and $\hat{\lambda}_{61}$ are nonzero, which means (i) is not satisfied.  When $\rho$ is small, $\hat{\lambda}_{12}$, $\hat{\lambda}_{22}$, and $\hat{\lambda}_{32}$ are still close to zero, but $\hat{\lambda}_{41}$, $\hat{\lambda}_{51}$, and $\hat{\lambda}_{61}$ become much larger than the true values.  Estimating some coefficients toward {\it exactly} zero makes other small coefficients larger than expected.  As a result, (ii) is not satisfied with the lasso.  When $\gamma = 0.01$, we obtain similar results, and thus, the ridge penalty does not make any contribution to the approximation of the true loading matrix.

The loading matrices estimated by the prenet penalty are given in Table \ref{table:illustration2}.  $\gamma = 1$ implies the second term in (\ref{prenet penalty}), $\sum_{i,j,k} \lambda_{ij}^2 \lambda_{ik}^2$, is not included.  When $\gamma = 1$, the prenet is able to produce a solution that is very close to (\ref{desirable}) for large $\rho$.  When $\rho$ is small, however, we obtain a tendency similar to the lasso; $\hat{\lambda}_{41}$, $\hat{\lambda}_{51}$, and $\hat{\lambda}_{61}$ are larger than the true values.  Therefore, (i) is satisfied but (ii) is not when $\gamma = 1$.

When $\gamma = 0.01$, the second term in (\ref{prenet penalty}), $\sum_{i,j,k} \lambda_{ij}^2 \lambda_{ik}^2$, is included in the prenet penalty.  When $\rho$ is large, we obtain a loading matrix that is similar to (\ref{desirable}).  Furthermore, as $\rho$ reduces, we obtain a loading matrix that is close to the true loading matrix in (\ref{noisy}).  Thus, the prenet penalty with $\gamma = 0.01$ satisfies both (i) and (ii).

\section{Properties of the prenet penalty}
\subsection{Perfect simple structure} \label{sec:pss}
Most existing penalties, such as the lasso, shrink all coefficients toward zero when the tuning parameter $\rho$ is sufficiently large; we usually obtain $\hat{\bm{\Lambda}} = \bm{0}$ when $\rho \rightarrow \infty$.  However, the following proposition shows that the prenet penalty does not shrink some of the elements toward zero even when $\rho$ is sufficiently large. 
\begin{prop}\label{prop:pss}
	Assume that we use the prenet penalty with $\gamma \in (0,1]$.  As $\rho \rightarrow \infty$, the estimated loading matrix possesses the perfect simple structure, that is, each row has at most one nonzero element.
\end{prop}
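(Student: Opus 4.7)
The plan is to use the classical penalty-method argument: as $\rho\to\infty$, the prenet penalty $\rho P$ forces any minimizer onto the zero set of $P$, which I will identify with the perfect-simple-structure set. A direct inspection of (\ref{prenet penalty}) shows that for $\gamma\in(0,1]$ each summand $\gamma|\lambda_{ij}||\lambda_{ik}|+\tfrac{1}{2}(1-\gamma)\lambda_{ij}^2\lambda_{ik}^2$ is nonnegative and vanishes if and only if $\lambda_{ij}\lambda_{ik}=0$. Hence $P(\bm{\Lambda})=0$ if and only if every row of $\bm{\Lambda}$ contains at most one nonzero entry; call this closed set $\mathcal{S}$.

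Next, I fix any $\bm{\Lambda}_0\in\mathcal{S}$ (say $\bm{\Lambda}_0=\bm{0}$) together with an admissible $\bm{\Psi}_0$ (say $\bm{\Psi}_0=\mathrm{diag}(\bm{S})$). Since $P(\bm{\Lambda}_0)=0$, the quantity $C_0:=\ell(\bm{\Lambda}_0,\bm{\Psi}_0)$ is a $\rho$-independent upper bound on the optimal value of $\ell_\rho$. Any minimizer $(\hat{\bm{\Lambda}}_\rho,\hat{\bm{\Psi}}_\rho)$ therefore satisfies $\ell(\hat{\bm{\Lambda}}_\rho,\hat{\bm{\Psi}}_\rho)+\rho P(\hat{\bm{\Lambda}}_\rho)\le C_0$, and because $\ell$ is bounded below---this follows from the AM--GM inequality applied to the eigenvalues of $\bm{\Sigma}^{-1/2}\bm{S}\bm{\Sigma}^{-1/2}$ for $\ell_{\mathrm{ML}}$, and trivially for $\ell_{\mathrm{QL}}$---one gets $P(\hat{\bm{\Lambda}}_\rho)=O(1/\rho)\to 0$ as $\rho\to\infty$.

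Finally, I pass to the limit. The same inequality also yields $\ell(\hat{\bm{\Lambda}}_\rho,\hat{\bm{\Psi}}_\rho)\le C_0$ uniformly in $\rho$, and because $\ell_{\mathrm{ML}}$ is coercive on the admissible parameter set---the term $\log|\bm{\Lambda}\bm{\Lambda}^{T}+\bm{\Psi}|$ diverges as $\|\bm{\Lambda}\|\to\infty$ or as $\bm{\Psi}$ approaches the boundary of positivity---the sequence $(\hat{\bm{\Lambda}}_\rho,\hat{\bm{\Psi}}_\rho)$ remains inside a compact subset of the admissible region. Any subsequential limit $\bm{\Lambda}^\infty$ then satisfies $P(\bm{\Lambda}^\infty)=0$ by continuity of $P$, and by the first step this means $\bm{\Lambda}^\infty\in\mathcal{S}$, so the estimated loading matrix possesses the perfect simple structure in the limit. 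The main obstacle is the coercivity/compactness step: one has to rule out degeneration of $\hat{\bm{\Psi}}_\rho$ (the usual Heywood-case concern) and verify that $\{\ell\le C_0\}$ is compact inside the open set of admissible $\bm{\Psi}$; for the ML discrepancy this is standard but requires a few careful lines.
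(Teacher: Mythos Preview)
Your approach is essentially the same as the paper's: both argue that as $\rho\to\infty$ the penalty term forces $P(\hat{\bm{\Lambda}}_\rho)\to 0$, and then identify the zero set of $P$ with the perfect-simple-structure set. The paper's proof is a two-line heuristic version of this (``otherwise the second term of (\ref{eq:pf}) diverges''), omitting the explicit comparison with a feasible $\bm{\Lambda}_0\in\mathcal{S}$, the lower bound on $\ell$, and the compactness step that you correctly supply to make the limiting statement rigorous.
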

\begin{proof}
	As $\rho \rightarrow \infty$, $P(\hat{\bm{\Lambda}})$ must satisfy $P(\hat{\bm{\Lambda}}) \rightarrow 0$.  Otherwise, the second term of (\ref{eq:pf}) diverges.  $P(\hat{\bm{\Lambda}}) = 0$ implies $ \hat{\lambda}_{ij}\hat{\lambda}_{ik} = 0$ for any $j \neq k$.  Therefore, the $i$th row of $\bm{\Lambda}$ has at most one nonzero element.
	\end{proof}
The perfect simple structure is known as a desirable property in the literature on factor analysis, because it is very easy to interpret the estimated loading matrix (e.g., \citealp{bernaards2003orthomax}).  When $\rho$ reduces, the estimated loading matrix can be far from the perfect simple structure but the goodness of fit to the model is improved.  
\subsubsection{Relationship with $k$-means variables clustering}\label{sec:kmeans}
The perfect simple structure corresponds to variables clustering, that is, variables that correspond to nonzero elements of the $j$th column of the loading matrix belong to the $j$th cluster.  One of the most popular cluster analyses is the $k$-means.  In this Subsection, we investigate the relationship between the prenet solution with $\rho \rightarrow \infty$ and the $k$-means variables clustering.

Let $\bm{X}_n$ be an $n \times p$ data matrix.  $\bm{X}_n$ can be expressed as $\bm{X}_n = (\bm{x}_1^*,\dots,\bm{x}_p^*)$, where $\bm{x}_i^*$ is the $i$th column vector of $\bm{X}_n$.  We consider the problem of the variables clustering of $\bm{x}_1^*,\dots,\bm{x}_p^*$ by the $k$-means.  Let $C_j$ $(j = 1,\dots,m)$ be a subset of indices of variables that belong to the $j$th cluster.  The objective function of the $k$-means is
\begin{equation}
	\sum_{j = 1}^m \sum_{i \in C_j}\|\bm{x}_i^* - \bm{\mu}_j\|^2 =  \sum_{i = 1}^ps_{ii} - \sum_{j=1}^m \frac{1}{p_j}\sum_{i \in C_j}\sum_{i' \in C_j}s_{ii'}, \label{f_kmeans}
\end{equation}
where $p_j = \#\{C_j\}$, $\bm{\mu}_j =  \frac{1}{p_j}\sum_{i \in C_j} \bm{x}_i^*$, and recall that $s_{ii'}$ is expressed as $s_{ii'} = \bm{x}_i^{*T}\bm{x}_{i'}^*$.  Let $\bm{\Lambda} = (\lambda_{ij})$ be a $p \times m$ indicator variables matrix given by
\begin{equation}
	\lambda_{ij} =
	\left\{
	\begin{array}{rr}
		1/\sqrt{p_j} & i \in C_j, \\
		0 & i \notin C_j.
	\end{array}
	\right. \label{q_ij}
\end{equation}
Using the fact that $\bm{\Lambda}^T\bm{\Lambda} = \bm{I}_m$, the $k$-means variables clustering using (\ref{f_kmeans}) is equivalent to \citep{ding2005equivalence}.
\begin{equation}
	\min_{\bm{\Lambda}} \|\bm{S} - \bm{\Lambda}\bm{\Lambda}^T\|^2, \ \mbox{ subject to } (\ref{q_ij}). \label{kmeans_equivalent}
\end{equation}
We consider slightly modifying the condition on $\bm{\Lambda}$ in (\ref{q_ij}) to
\begin{equation}
	\lambda_{ij} \lambda_{ik} = 0 \ (j \ne k)\mbox{ and } \bm{\Lambda}^T\bm{\Lambda} = \bm{I}_m. \label{q_ij_adj}
\end{equation}
 The modified $k$-means problem is then given as
\begin{equation}
	\min_{\bm{\Lambda}} \|\bm{S} - \bm{\Lambda}\bm{\Lambda}^T\|^2 \mbox{ subject to }  (\ref{q_ij_adj}).\label{problem_kmeansgen}
\end{equation}
Note that condition (\ref{q_ij_adj}) is milder than (\ref{q_ij}): if $\bm{\Lambda}$ satisfies (\ref{q_ij}), we obtain (\ref{q_ij_adj}).  The reverse does not hold; with (\ref{q_ij_adj}), the nonzero elements for each column do not have to be equal.  Therefore, the modified $k$-means in (\ref{problem_kmeansgen}) may capture a more complex structure than the original $k$-means.

\begin{prop}\label{prop_kmeans}
Assume that $\bm{\Psi} = \alpha \bm{I}$ and $\alpha$ is given.  Suppose that $\bm{\Lambda}$ satisfies $\bm{\Lambda}^T\bm{\Lambda} = \bm{I}_m$.  The prenet solution with $\rho \rightarrow \infty$ is then obtained by (\ref{problem_kmeansgen}).
\end{prop}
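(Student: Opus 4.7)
The plan is to reduce both optimization problems, under the stated constraints, to a common form and then invoke the perfect simple structure property established in Proposition \ref{prop:pss}. First I would use Proposition \ref{prop:pss}: as $\rho\to\infty$, the minimizer $\hat{\bm{\Lambda}}$ satisfies $P(\hat{\bm{\Lambda}})=0$, which means $\hat{\lambda}_{ij}\hat{\lambda}_{ik}=0$ for all $j\ne k$. Combined with the hypothesis $\bm{\Lambda}^T\bm{\Lambda}=\bm{I}_m$, this is exactly the constraint set (\ref{q_ij_adj}). Because $\rho P(\hat{\bm{\Lambda}})\to 0$, the limiting problem is to minimize $\ell_{\rm ML}(\bm{\Lambda},\alpha\bm{I})$ subject to (\ref{q_ij_adj}), so it suffices to show that on this feasible set the discrepancy function and $\|\bm{S}-\bm{\Lambda}\bm{\Lambda}^T\|^2$ share the same minimizer.

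Next I would exploit the eigenstructure induced by the orthonormality constraint. Write $\bm{P}=\bm{\Lambda}\bm{\Lambda}^T$; since $\bm{\Lambda}^T\bm{\Lambda}=\bm{I}_m$, $\bm{P}$ is a rank-$m$ orthogonal projector, so the eigenvalues of $\bm{\Sigma}=\bm{P}+\alpha\bm{I}$ are $1+\alpha$ (with multiplicity $m$) and $\alpha$ (with multiplicity $p-m$). Consequently $\log|\bm{\Sigma}|=m\log(1+\alpha)+(p-m)\log\alpha$ does not depend on $\bm{\Lambda}$. Applying the Woodbury identity gives
\begin{equation*}
\bm{\Sigma}^{-1}=\frac{1}{\alpha}\bm{I}-\frac{1}{\alpha(1+\alpha)}\bm{\Lambda}\bm{\Lambda}^T,
\end{equation*}
so $\tr(\bm{\Sigma}^{-1}\bm{S})=\frac{1}{\alpha}\tr(\bm{S})-\frac{1}{\alpha(1+\alpha)}\tr(\bm{\Lambda}^T\bm{S}\bm{\Lambda})$. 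Therefore minimizing $\ell_{\rm ML}$ over the feasible set is equivalent to maximizing $\tr(\bm{\Lambda}^T\bm{S}\bm{\Lambda})$.

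Then I would perform the analogous reduction for the modified $k$-means objective. Expanding the Frobenius norm and using $\bm{\Lambda}^T\bm{\Lambda}=\bm{I}_m$ gives
\begin{equation*}
\|\bm{S}-\bm{\Lambda}\bm{\Lambda}^T\|^2=\tr(\bm{S}^2)-2\tr(\bm{\Lambda}^T\bm{S}\bm{\Lambda})+m,
\end{equation*}
so minimizing the $k$-means objective subject to (\ref{q_ij_adj}) is also equivalent to maximizing $\tr(\bm{\Lambda}^T\bm{S}\bm{\Lambda})$ over the same set. Since both reduced problems are identical, their minimizers coincide, which is the claim.

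The only potential obstacle is a careful handling of the $\rho\to\infty$ limit: one must ensure that the limiting argument is valid, i.e., that the penalty term forces $P(\hat{\bm{\Lambda}})\to 0$ while the loss remains bounded on the feasible set (\ref{q_ij_adj}). This follows because there exist $\bm{\Lambda}$ satisfying (\ref{q_ij_adj}) at which $\ell_{\rm ML}$ is finite, so any minimizing sequence must eventually satisfy the perfect simple structure condition up to negligible terms; the rest of the argument is pure linear algebra.
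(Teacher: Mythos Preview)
Your proposal is correct and follows essentially the same route as the paper: invoke Proposition~\ref{prop:pss} to obtain the constraint $\hat{\lambda}_{ij}\hat{\lambda}_{ik}=0$, then use Woodbury and the orthonormality $\bm{\Lambda}^T\bm{\Lambda}=\bm{I}_m$ to reduce both $\ell_{\rm ML}$ and $\|\bm{S}-\bm{\Lambda}\bm{\Lambda}^T\|^2$ to the common objective $\tr(\bm{\Lambda}^T\bm{S}\bm{\Lambda})$. Your treatment is slightly more explicit (eigenvalue computation of $|\bm{\Sigma}|$, the $+m$ term in the Frobenius expansion, and the remark on the limiting argument), but the logic is identical to the paper's.
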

\begin{proof}
	The proof appears in \ref{prop_kmeans:app}.
\end{proof}
The above proposition shows that the prenet solution with $\rho \rightarrow \infty$ is a generalization of the problem (\ref{problem_kmeansgen}).  As mentioned above, the problem (\ref{problem_kmeansgen}) is a generalization of the $k$-means problem in (\ref{kmeans_equivalent}).  Therefore, the perfect simple structure estimation via the prenet is a generalization of the $k$-means variables clustering.

\subsection{Relationship with quartimin rotation}\label{sec:qmin}
As described in Section \ref{sec:prenet}, the prenet penalty is a generalization of the quartimin criterion \citep{carroll1953analytical}; setting $\gamma = 0$ to the prenet penalty in (\ref{prenet penalty}) leads to the quartimin criterion
\begin{equation*}
	P_{\rm qmin}(\bm{\Lambda}) = \sum_{i = 1}^p \sum_{j = 1}^{m-1 } \sum_{k > j}    \lambda_{ij}^2\lambda_{ik}^2.
\end{equation*}

The quartimin criterion is usually used in the factor rotation.  The solution of quartimin rotation method, say $\hat{\bm{\theta}}_q=(\hat{\bm{\Lambda}}_q,\hat{\bm{\Psi}}_q)$, is obtained by two-step procedure.  First, we calculate an unpenalized estimator, say $\hat{\bm{\theta}}=(\hat{\bm{\Lambda}},\hat{\bm{\Psi}})$. $\hat{\bm{\theta}}$ satisfies $\displaystyle {\ell}(\hat{\bm{\theta}})=\min_{\bm{\theta}} {\ell}(\bm{\theta})$.  Note that  $\hat{\bm{\theta}}$ is not unique because of the rotational indeterminacy.  The second step is the minimization of the quartimin criterion with a restricted parameter space given by $\{ \bm{\theta}|\ell(\bm{\theta}) = \min_{\bm{\theta}}\ell(\bm{\theta}) \}$.  \citet{hirose2015sparse} showed that the solution of the quartimin rotation, $\hat{\bm{\theta}}_q$, can be obtained by 
 \begin{equation}
\min_{\bm{\theta}}P_{\rm qmin}(\bm{\Lambda}), \mbox{ subject to} \quad   \ell(\bm{\theta}) = \ell(\hat{\bm{\theta}})\label{problem_rotation_mle2}
\end{equation}
under the condition that the unpenalized estimate of loading matrix $\hat{\bm{\Lambda} }$ is unique if the indeterminacy of the rotation in $\hat{\bm{\Lambda} }$ is excluded.  Note that it is not easy to check this condition, but several necessary conditions of the identifiability are provided (e.g., Theorem 5.1 in \citealp{anderson1956statistical}.)

Now, we show a basic asymptotic result of the prenet solution, 
from which we can see that the prenet solution is a generalization of the quartimin rotation.  
Let $(\Theta,d)$ be a compact parameter space and $(\Omega,\mathcal{F},\mathbb{P})$ be a probability space.  
Suppose that for any $(\bm{\Lambda},\bm{\Psi})\in\Theta$ and any $\bm{T}\in \mathcal{O}(m)$, we have $(\bm{\Lambda} \bm{T} ,\bm{\Psi}) \in \Theta$, where $\mathcal{O}(m)$ is a set of  $m\times m$ orthonormal matrices. 
Let $\bm{X}_1,\dots,\bm{X}_n$ denote independent $\mathbb{R}^p$-valued random variables with the common population distribution $\mathbb{P}$.
Now, it is required that we can rewrite the empirical loss function and the true loss function 
as $\ell(\bm{\theta}):=\sum_{i=1}^n q(\bm{X}_i;\bm{\theta})/n$ and 
$\ell_\ast(\bm{\theta}):=\int q(\bm{x};\bm{\theta})\,\mathbb{P}(d\bm{x})$, respectively.  Note that the function $q(\bm{x};\bm{\theta})$ can be a logarithm of density function of normal distribution when  $\ell(\bm{\theta})$ is the discrepancy function, but any other functions that satisfy regularity conditions described in Proposition \ref{prop_qmin} can be used. 
Let $\hat{\bm{\theta}}_\rho=(\hat{\bm{\Lambda}}_\rho,\hat{\bm{\Psi}}_\rho)$ denote 
an arbitrary measurable prenet estimator which satisfies 
$\ell(\hat{\bm{\theta}}_\rho)
+\rho P(\hat{\bm{\Lambda}}_\rho)=\min_{\bm{\theta}\in \Theta} \ell(\bm{\theta})+\rho P(\bm{\Lambda})$.
The following proposition shows that the prenet estimator converges almost surely to 
a {\it true} parameter which minimizes the quartimin criterion when $\rho\rightarrow 0$ as $n\rightarrow \infty$.
\begin{prop} \label{prop_qmin}
Assume the following conditions: 
\begin{itemize}
\item For each $\bm{x}\in \mathbb{R}^p$, function $q(\bm{x};\bm{\theta})$ on $\Theta$ is continuous. 
\item There exists a $\mathbb{P}$-integrable function $g(\bm{x})$ such that for all $\bm{x}\in \mathbb{R}^p$ and for all $\bm{\theta}\in \Theta$
$|q(\bm{x};\bm{\theta})|\le g(\bm{x})$.
\end{itemize}
We denote by $\Theta_q^\ast$ a set of true solutions of the following quartimin problem:
$$
\min_{(\bm{\Lambda},\bm{\Psi})\in \Theta}P_\mathrm{qmin}(\bm{\Lambda})\;\text{ subject to }\;
\ell_\ast(\bm{\theta})=\min_{\bm{\theta}\in \Theta}\ell_\ast(\bm{\theta}).
$$
Let $\rho_n$ ($n = 1,2,\dots$) be a sequence that satisfies $\rho_n > 0$ and $\lim_{n \rightarrow \infty}\rho_n = 0$.  Let the prenet solution with $\gamma = 0$ and $\rho = \rho_n$ be $\hat{\bm{\theta}}_{\rho_n}$.  
Then we obtain
\begin{equation*}
	\lim_{n \rightarrow \infty}d(\hat{\bm{\theta}}_{\rho_n},\Theta_q^\ast) = 0\quad\text{a.s.},
\end{equation*}
where $d(\bm{a},B):=\inf_{\bm{b}\in B}d(\bm{a},\bm{b})$.
\end{prop}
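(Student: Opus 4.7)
The plan is to combine a uniform strong law of large numbers with the rotation invariance of the factor-analysis loss so that every subsequential limit of $\hat{\bm\theta}_{\rho_n}$ is forced into $\Theta_q^\ast$. The first ingredient, Jennrich's theorem, gives
$$\sup_{\bm\theta\in\Theta}\bigl|\ell(\bm\theta)-\ell_\ast(\bm\theta)\bigr|\to 0\quad\text{a.s.,}$$
under the two stated regularity conditions and compactness of $\Theta$; I would prove it by a standard covering argument (finite cover of $\Theta$, pointwise SLLN at the centres, and dominated convergence with envelope $2g$ to control oscillations). Two routine remarks: $P_{\rm qmin}$ is a polynomial and hence bounded on the compact set $\Theta$ by some $M<\infty$, and $\ell_\ast$ is continuous by dominated convergence. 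Since $\Theta$ is compact, $\{\hat{\bm\theta}_{\rho_n}\}$ has convergent subsequences, and it suffices to show that every subsequential limit $\bm\theta_\infty=(\bm\Lambda_\infty,\bm\Psi_\infty)$ lies in $\Theta_q^\ast$.

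Fix $\bm\theta^\ast\in\Theta_q^\ast$ and extract a subsequence with $\hat{\bm\theta}_{\rho_{n_k}}\to\bm\theta_\infty$. The defining prenet inequality with comparator $\bm\theta^\ast$ reads
$$\ell(\hat{\bm\theta}_{\rho_{n_k}})+\rho_{n_k}P_{\rm qmin}(\hat{\bm\Lambda}_{\rho_{n_k}})\;\le\;\ell(\bm\theta^\ast)+\rho_{n_k}P_{\rm qmin}(\bm\Lambda^\ast).$$
The ULLN together with continuity of $\ell_\ast$ gives $\ell(\hat{\bm\theta}_{\rho_{n_k}})\to\ell_\ast(\bm\theta_\infty)$ and $\ell(\bm\theta^\ast)\to\ell_\ast(\bm\theta^\ast)=\min_\Theta\ell_\ast$, while both penalty terms are bounded by $\rho_{n_k}M\to 0$. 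Passing to the limit yields $\ell_\ast(\bm\theta_\infty)\le\min_\Theta\ell_\ast$, so $\bm\theta_\infty$ is a population minimizer of $\ell_\ast$.

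To upgrade this membership to $\bm\theta_\infty\in\Theta_q^\ast$ I would exploit rotation invariance of the factor-analysis loss: $q(\bm x;\bm\Lambda,\bm\Psi)$ depends on $\bm\Lambda$ only through $\bm\Lambda\bm\Lambda^T+\bm\Psi$, and hence $\ell(\bm\Lambda\bm T,\bm\Psi)=\ell(\bm\Lambda,\bm\Psi)$ for every $\bm T\in\mathcal{O}(m)$. Because $\Theta$ is closed under such rotations, $(\hat{\bm\Lambda}_{\rho_{n_k}}\bm T,\hat{\bm\Psi}_{\rho_{n_k}})$ is an admissible comparator in the prenet inequality; the $\ell$-terms cancel exactly and, after dividing by $\rho_{n_k}>0$, one obtains
$$P_{\rm qmin}(\hat{\bm\Lambda}_{\rho_{n_k}})\;\le\;P_{\rm qmin}(\hat{\bm\Lambda}_{\rho_{n_k}}\bm T)\quad\text{for every }\bm T\in\mathcal{O}(m).$$
Passing $k\to\infty$ via continuity of $P_{\rm qmin}$ gives $P_{\rm qmin}(\bm\Lambda_\infty)\le P_{\rm qmin}(\bm\Lambda_\infty\bm T)$ for all $\bm T$. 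Under the identifiability condition recalled after \eqref{problem_rotation_mle2}, the population minimizer set of $\ell_\ast$ coincides with the rotation orbit $\{(\bm\Lambda_\infty\bm T,\bm\Psi_\infty):\bm T\in\mathcal{O}(m)\}$, so this display says precisely that $\bm\theta_\infty$ minimizes $P_{\rm qmin}$ over that set, placing $\bm\theta_\infty\in\Theta_q^\ast$. The delicate point, worth flagging, is this last step: a direct division of the prenet inequality with comparator $\bm\theta^\ast$ by $\rho_{n_k}$ would leave a quotient $(\ell(\bm\theta^\ast)-\ell(\hat{\bm\theta}_{\rho_{n_k}}))/\rho_{n_k}$ that cannot be controlled without a rate linking $\sup|\ell-\ell_\ast|$ to $\rho_n$, and no such rate is hypothesised; the rotation-orbit comparator sidesteps this entirely by cancelling the $\ell$-terms exactly.
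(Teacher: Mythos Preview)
Your argument and the paper's coincide on the first half: both use a uniform law of large numbers to show that every accumulation point $\bm\theta_\infty$ of $\hat{\bm\theta}_{\rho_n}$ lies in $\Theta_\ast=\{\bm\theta:\ell_\ast(\bm\theta)=\min\ell_\ast\}$. The divergence is in how one proves $\bm\theta_\infty\in\Theta_q^\ast$. You compare $\hat{\bm\theta}_{\rho_n}$ against its own orthogonal rotations, cancel the $\ell$-terms exactly, and obtain $P_{\rm qmin}(\hat{\bm\Lambda}_{\rho_n})\le P_{\rm qmin}(\hat{\bm\Lambda}_{\rho_n}\bm T)$ for all $\bm T\in\mathcal O(m)$; this is a clean observation, but it only places $\bm\Lambda_\infty$ at the quartimin minimum \emph{of its own rotation orbit}. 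To conclude membership in $\Theta_q^\ast$ you then invoke the identifiability condition that $\Theta_\ast$ is a single orbit. That condition is \emph{not} among the hypotheses of the proposition (the text after \eqref{problem_rotation_mle2} concerns the empirical estimate, and in any case is not listed as an assumption here), so as written your argument proves a weaker statement than the one claimed.

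The paper sidesteps this by choosing a different comparator: the empirical quartimin solution $\hat{\bm\theta}_q$, i.e.\ a minimizer of $P_{\rm qmin}$ over the empirical argmin set of $\ell$. Because $\hat{\bm\theta}_q$ is an unpenalized minimizer, one has $\ell(\hat{\bm\theta}_{\rho_n})\ge \ell(\hat{\bm\theta}_q)$, and the prenet inequality with comparator $\hat{\bm\theta}_q$ then yields directly
\[
P_{\rm qmin}(\hat{\bm\Lambda}_{\rho_n})\;\le\;P_{\rm qmin}(\hat{\bm\Lambda}_q),
\]
without dividing by $\rho_n$ and without restricting to a single orbit. The remaining work is to show $P_{\rm qmin}(\hat{\bm\Lambda}_q)\to\min_{\Theta_\ast}P_{\rm qmin}$, which the paper obtains from consistency of $\hat{\bm\theta}$ together with uniform continuity of $P_{\rm qmin}$ and rotation invariance of the Frobenius norm. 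In short, your rotation-orbit trick and the paper's empirical-minimizer trick both solve the ``cannot divide by $\rho_n$'' obstacle you correctly flagged, but the paper's choice of comparator delivers a bound against the global quartimin value rather than just the orbit value, and so does not require the extra identifiability hypothesis you introduced.
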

\begin{proof}
	The proof is given in \ref{qmin:app}.
\end{proof}
\begin{remark}
Proposition \ref{prop_qmin} uses a set of true solutions $\Theta_q^\ast$ instead of one true solution $\bm{\theta}_q^*$.   This is because even if the quartimin solution does not have a rotational indeterminacy, it still has an indeterminacy with respect to sign and permutation of columns of the loading matrix.
\end{remark}
\begin{remark}
	In the lasso-type penalization procedure, it is interesting to investigate the consistency in model selection and asymptotic normality (e,g, \citealp{FanLi:2001}).  However, in general, it is difficult to show the model selection consistency and the asymptotic normality simultaneously \citep{knight2000asymptotics}.  Further investigation of the asymptotic properties is beyond the scope of this paper but should be considered as a future research topic.
\end{remark}

\subsection{Miscellaneous}
\subsubsection{Comparison with general rotation criterion}
With the penalization procedure, we can construct a penalty term that is based on rotation criteria other than quartimin criterion.  For example, the penalty based on the varimax rotation \citep{kaiser1958varimax} may be expressed as
\begin{equation*}
P(\bm{\Lambda}) = \sum_{j = 1}^m  \sum_{k \ne j} \sum_{i = 1}^p \lambda_{ij}^2\lambda_{ik}^2 + \frac{1}{p}  \sum_{j=1}^m  \left( \sum_{i=1}^p \lambda_{ij}^2 \right)^2.
\end{equation*}
The derivation is given in Appendix B.  Although the varimax rotation is very popular, the corresponding penalty does not have the property that $\rho \rightarrow \infty$ leads to the perfect simple structure.  In fact, $\hat{\bm{\Lambda}} = \bm{0}$ as $\rho \rightarrow \infty$.  We have derived several penalty terms based on the rotation criteria, but only the quartimin criterion possesses the perfect simple structure when $\rho \rightarrow \infty$.

\subsubsection{Normalization of factor loadings}
In factor rotation, the normalized loading matrix
\begin{equation*}
	\tilde{\lambda}_{ij} = \frac{\lambda_{ij}}{\sqrt{\sum_{k = 1}^m \lambda_{ik}^2}}
\end{equation*}
often provides better results than the unnormalized loading matrix.  In the prenet penalization, we may use the normalized penalty, in which $\lambda_{ij}$ is replaced with $\tilde{\lambda}_{ij}$
\begin{equation*}
P(\bm{\Lambda}) = \sum_{i = 1}^p \sum_{j = 1}^{m-1 } \sum_{k > j} \left\{ \frac{1}{2} (1-\gamma)  \frac{\lambda_{ij}^2\lambda_{ik}^2}{(\sum_{q = 1}^m \lambda_{iq}^2)^2} + \gamma \frac{|\lambda_{ij}||\lambda_{ik}|}{\sum_{q = 1}^m \lambda_{iq}^2} \right\}. \label{normalized penalty}
\end{equation*}
However, the above penalty is scale-invariant, that is, $P(\bm{\Lambda}) = P(a\bm{\Lambda})$ for any $a \in (0,1]$.  This fact is completely opposed to the basic concept of the penalization procedure that the penalty term should be small when the elements of $\bm{\Lambda}$ are small.  Therefore, the normalized prenet penalty does not make any sense.  
Instead, we may use a weighted penalty
\begin{equation}
P(\bm{\Lambda}) = \sum_{i=1}^p \sum_{j = 1}^{m-1 } \sum_{k > j}  \left\{ \frac{1}{2} (1-\gamma) w_i^2 \lambda_{ij}^2\lambda_{ik}^2 + \gamma w_i |\lambda_{ij}||\lambda_{ik}| \right\}, \label{weighted prenet penalty}
\end{equation}
where $w_i = 1/\sum_{q = 1}^m \hat{\lambda}_{iq}^2$.  Here, $\hat{\lambda}_{iq}$ is the $(i,q)$th element of the maximum likelihood estimate of loading matrix $\hat{\bm{\Lambda}}_{\rm ML}$.  Note that $w_i$ is independent of the factor rotation.  We can show that the weighted prenet penalty in (\ref{weighted prenet penalty}) is a generalization of the quartimin criterion with the weighted loading matrix: with $\gamma = 0$ and $\rho \rightarrow 0$, we obtain a normalized loading matrix estimated by the quartimin criterion.  This property can be proved in the same manner as Proposition \ref{prop_qmin}.

\section{Algorithm}
It is well-known that the solutions estimated by the lasso-type penalization methods are not usually expressed in a closed form, because the penalty term includes an indifferentiable function.  As the objective function of the prenet is nonconvex and nonseparable, it is not easy to construct an efficient algorithm to obtain a global minimum.  Here, we use the GEM algorithm, in which the latent factors are considered to be missing values.  The complete-data log-likelihood function is increased with the use of the coordinate descent algorithm \citep{Friedmanetal:2010}, which is a commonly used algorithm in the lasso-type penalization.  Although our proposed algorithm is not guaranteed to attain the global minimum, our algorithm decreases the objective function at each step. 

The prenet tends to be multimodal for large $\rho$, because our algorithm is a generalization of the $k$-means algorithm (the $k$-means algorithm also depends on the initial values).  Therefore, we prepare many initial values, estimate the solutions for each initial value, and select a solution that minimizes the penalized loss function.  In this case, it seems that we require heavy computational loads.  However, as described in Subsection \ref{sec:initial value}, we can construct an efficient algorithm for a sufficiently large $\rho$.

\subsection{Update equation for fixed tuning parameters}
We provide update equations of factor loadings and unique variances when $\rho$ and $\gamma$ are fixed.  Suppose that $\bm{\Lambda}_{\rm old}$ and $\bm{\Psi}_{\rm old}$ are the current values of factor loadings and unique variances, respectively.  The parameter can be updated by minimizing the negative expectation of the complete-data penalized log-likelihood function with respect to $\bm{\Lambda}$  and $\bm{\Psi}$ (e.g., \citealp{hirose2015sparse}):
\begin{eqnarray}
Q(\bm{\Lambda},\bm{\Psi}) & = & \sum_{i = 1}^p \log \psi_i + \sum_{i = 1}^p \frac{s_{ii} - 2\bm{\lambda}_i^T\bm{b}_i+ \bm{\lambda}_i^T \bm{A}\bm{\lambda}_i}{\psi_i} + \rho P( \bm{\Lambda})+{\rm const.,} \label{ECL}
\end{eqnarray}
where $\bm{b}_i = \bm{M}^{-1}\bm{\Lambda}_{\rm old}^T\bm{\Psi}_{\rm old }^{-1}\bm{s}_i$ and $\bm{A} = \bm{M} ^{-1} + \bm{M}^{-1}\bm{\Lambda}_{\rm old}^T\bm{\Psi}_{\rm old }^{-1}\bm{S}\bm{\Psi}_{\rm old }^{-1}\bm{\Lambda}_{\rm old }\bm{M}^{-1}$.  Here, $\bm{M} = \bm{\Lambda}_{\rm old }^T\bm{\Psi}_{\rm old }^{-1}\bm{\Lambda}_{\rm old } + \bm{I}_m$, and $\bm{s}_i$ is the $i$th column vector of $\bm{S}$.  In practice, minimization of (\ref{ECL}) is difficult, because the prenet penalty consists of nonconvex and nonseparable functions.  Therefore, we use a coordinate descent algorithm and obtain updated parameters, say $(\bm{\Lambda}^{\rm new}, \bm{\Psi}^{\rm new})$, which decrease the negative penalized complete-data log-likelihood function
\begin{eqnarray*}
Q(\bm{\Lambda}^{\rm new},\bm{\Psi}^{\rm new}) \le  Q(\bm{\Lambda},\bm{\Psi}) .\label{maxECL}
\end{eqnarray*}
The update equation of the coordinate descent algorithm is given in Appendix C.

After updating $\bm{\Lambda}$ using the coordinate descent algorithm, the unique variances of $\bm{\Psi}$ are updated by minimizing the function (\ref{ECL})
\begin{equation*}
\psi_i^{\rm new} = s_{ii} - 2 (\bm{\lambda}^{\rm new}_i)^T\bm{b}_i  +  (\bm{\lambda}^{\rm new}_i)^T \bm{A} \bm{\lambda}^{\rm new}_i \quad  \mbox{for $i = 1,\dots,p$,}
\end{equation*}
where $\psi_i^{\rm new}$ is the $i$th diagonal element of $\bm{\Psi}^{\rm new}$, and $\bm{\lambda}^{\rm new}_i$ is the $i$th row of $\bm{\Lambda}^{\rm new}$.

\subsection{Efficient algorithm for sufficiently large $\rho$}\label{sec:initial value}
For sufficiently large $\rho$, the $i$th column of loading matrix $\bm{\Lambda}$ has at most one nonzero element, denoted by $\lambda_{ij}$.  With the expectation--maximization (EM) algorithm, we can easily find the location of the nonzero parameter when the current value of the parameter is given.  Assume that the $(i,j)$th element of the loading matrix is nonzero and the $(i,k)$th elements ($k \neq j$) are zero.  Because the penalty function attains zero for sufficiently large $\rho$, it is sufficient to minimize the following function:
\begin{eqnarray}
f(\lambda_{ij}) = \bm{\lambda}_i^T \bm{A}\bm{\lambda}_i - 2\bm{\lambda}_i^T\bm{b}_i = a_{jj}\lambda_{ij}^2 - 2\lambda_{ij}b_{ij} \label{minimization function:initial}
\end{eqnarray}
The minimizer is easily obtained by
\begin{equation}
\hat{\lambda}_{ij} = {b_{ij}}/{a_{jj}}. \label{eq:minimizerlam}	
\end{equation}
Substituting (\ref{eq:minimizerlam}) into (\ref{minimization function:initial}) gives us
$f(\hat{\lambda}_{ij}) = -\frac{b_{ij}^2}{a_{jj}}$.  Therefore, the index $j$ that minimizes the function $f(\lambda_{ij})$ is given by
\begin{equation*}
j = {\rm argmax}_k\frac{b_{ik}^2}{a_{kk}},
\end{equation*}
and $\bm{\lambda}_i$ is updated as $\hat{\lambda}_{ij} = {b_{ij}}/{a_{jj}}$ and $\hat{\lambda}_{ik} = 0$ $(\forall k \ne j)$.

\subsection{Selection of the maximum value of $\rho$}
The value of $\rho_{\max}$, which is the minimum value of $\rho$ that produces the perfect simple structure, is easily obtained using $\hat{\bm{\Lambda}}$ given by (\ref{eq:minimizerlam}).  Assume that $\hat{\lambda}_{ij}\neq 0$ and $\hat{\lambda}_{ik} = 0$ ($k \neq j$).  Using the update equation of $\lambda_{ik}$ in (\ref{lambdaupdate}) and the soft thresholding function in (\ref{uelasso}), we show that the regularization parameter $\rho$ must satisfy the following inequality to ensure that $\lambda_{ik}$ is estimated to be zero:
\begin{equation*}
\left|\frac{b_{ik}- a_{kj}\hat{\lambda}_{ij}}{a_{kk}+\rho\psi_i(1-\gamma)\hat{\lambda}_{ij}^2 }\right| \le \frac{\psi_i}{a_{kk}+\rho\psi_i(1-\gamma)\hat{\lambda}_{ij}^2 }\rho \gamma|\hat{\lambda}_{ij}|.
\end{equation*}
Thus, the value of $\rho_{\max}$ is given by
\begin{equation*}
\rho_{\max} = \max_{i} \max_{k \in C_i}\frac{|b_{ik}- a_{kj}\hat{\lambda}_{ij}|}{\gamma\psi_i|\hat{\lambda}_{ij}|}, \label{eq:rho_max}
\end{equation*}
where $C_i = \{k|k \ne j, \hat{\lambda}_{ij} \ne 0\}$.

\subsection{Estimation of the entire path of solutions}
The entire path of solutions can be produced with the grid of increasing values $\{\rho_1,\dots,\rho_K\}$.  Here, $\rho_K$ is given by (\ref{eq:rho_max}), and $\rho_{1} = \rho_{K} \Delta \sqrt{\gamma}$, where $\Delta$ is a small value such as $0.001$.  The term $\sqrt{\gamma}$ allows us to estimate a variety of models even if $\gamma$ is small.

The entire solution path can be made using a decreasing sequence $\{\rho_K\dots,\rho_1\}$, starting with $\rho_K$.  Note that the proposed algorithm at $\rho_K$ does not always converge to the global minimum, so that we prepare many initial values, estimate solutions for each initial value with the use of the efficient algorithm described in Subsection \ref{sec:initial value}, and select a solution that minimizes the penalized log-likelihood function.  We can use the warm start, which can provide the starting values of the parameters: the solution at $\rho_{k-1}$ can be computed using the solution at $\rho_k$, which leads to improved and smoother objective value surfaces \citep{Mazumderetal:2009}.  The cold start may be used, but it requires heavy computational loads.

\section{Monte Carlo simulations}\label{sec:simulation}
In this simulation study, we use four simulation models.  The first three models are as below.
\begin{flushleft}
{\bf Model (A):}
\end{flushleft}
\vspace{-2em}
\begin{eqnarray*}
\bm{\Lambda} &=&
\left(
\begin{array}{rrrrrr}
0.95&0.9&0.85&0.0&0.0&0.0\\
0.0&0.0&0.0&0.8&0.75&0.7
\end{array}
\right)^T,
\end{eqnarray*}
\begin{flushleft}
{\bf Model (B):}
\end{flushleft}
\vspace{-1em}
 \begin{eqnarray*}
 \bm{\Lambda} &=&
 \left(
 \begin{array}{rrrrrr}
0.9&0.8&0.7&0.2&0.2&0.2\\
0.2&0.2&0.2&0.9&0.8&0.7
 \end{array}
 \right)^T,
\end{eqnarray*}
\begin{flushleft}
 {\bf Model (C):}
\end{flushleft}
\vspace{-1em}
 \begin{eqnarray*}
 \bm{\Lambda} &=&
 \left(
 \begin{array}{rrrrrr}
0.8 \cdot \bm{1}_{25} &\bm{0}_{25}&\bm{0}_{25}&\bm{0}_{25}\\
\bm{0}_{25} &0.75\cdot \bm{1}_{25}&\bm{0}_{25}&\bm{0}_{25}\\
\bm{0}_{25}&  \bm{0}_{25} &0.7\cdot\bm{1}_{25}&\bm{0}_{25}\\
\bm{0}_{25} & \bm{0}_{25} &\bm{0}_{25}&0.65\cdot\bm{1}_{25}\\
 \end{array}
 \right),
 \end{eqnarray*}
where $\bm{1}_{25}$ is a $25$-dimensional vector with each element being 1, and $\bm{0}_{25}$ is a 25-dimensional zero vector.  We also use Model (D), which is similar to Model (C) but replace 100 randomly chosen elements out of 300 zero elements of $\bm{\Lambda}$ with $U(0.4,0.6)$.  If the communality of $\bm{\Lambda}$ is greater than 1, the corresponding row is scaled so that the communality becomes 0.95.  Then, the unique variances are obtained by $\bm{\Psi} = {\rm diag} (\bm{I} -  \bm{\Lambda}\bm{\Lambda}^T)$.

In Models (A) and (C), the loading matrix possesses the perfect simple structure.  Model (C) is a large model compared with Model (A).  The loading matrix of Model (B) is not sparse but we can interpret that the first factor is related to the first three observed variables, and the second factor is related to the remaining three observed variables.  As the loading matrix is the same as that given in Section \ref{sec:illustration}, the prenet penalty is expected to outperform the lasso.  Model (D) is as large as Model (C) but does not possess the perfect simple structure.  We use Model (D) to explore the performance of the proposed procedure when the true loading matrix does not possess the perfect simple structure.

The model parameter is estimated by the prenet penalty using $\gamma = 1.0$ and $\gamma = 0.01$, and the minimax concave penalty (MC penalty; \citealp{Zhang:2010})
\begin{eqnarray*}
\rho P(\bm{\Lambda};\rho;\gamma)&=& \sum_{i = 1}^p\sum_{j = 1}^m\rho\int_0^{|\lambda_{ij}|}\left(1-\frac{x}{\rho\gamma}\right)_+dx\\
& = &\sum_{i = 1}^p\sum_{j = 1}^m\left\{ \rho \left(|\lambda_{ij}|-\frac{\lambda_{ij}^2}{2\rho\gamma}\right) I(|\lambda_{ij}| < \rho\gamma)+ \frac{\rho^2\gamma}{2}I(|\lambda_{ij}| \ge \rho\gamma) \right\},
\end{eqnarray*}
with $\gamma = \infty$ and $\gamma = 3$.  Note that $\gamma = \infty$ with the MC penalty is equivalent to the lasso.  The regularization parameter is selected by the Akaike information criterion (AIC), Bayesian information crietrion (BIC), and extended BIC (EBIC; \citealp{chen2008extended})
\begin{eqnarray*}
{\rm AIC} & = & -2 \ell(\hat{\bm{\Lambda}},\hat{\bm{\Psi}}) + 2p_0,\\
{\rm BIC} & = & -2 \ell(\hat{\bm{\Lambda}},\hat{\bm{\Psi}}) + (\log N ) p_0,\\
{\rm EBIC} & = & -2 \ell(\hat{\bm{\Lambda}},\hat{\bm{\Psi}}) +  (\log N ) p_0  + 2p_0 \delta \log(pm),
\end{eqnarray*}
where $p_0$ is the number of nonzero parameters, and $\delta \in [0,1]$ is a hyper-parameter of the prior distribution of the EBIC.  In this simulation, we select $\delta = 1$.  For each model, $T = 100$ data sets are generated with $\bm{x} \sim N(\bm{0},\bm{\Lambda}\bm{\Lambda}^T + \bm{\Psi})$.  The number of observations is $n = 50, 100$, and $500$.  Tables \ref{table:simulation1}--\ref{table:simulation4} show the mean squared error defined by
\begin{eqnarray*}
{\rm MSE} = \frac{1}{T} \sum_{s = 1}^{T} \frac{\| \bm{\Lambda} - \hat{\bm{\Lambda}}^{(s)} \|^2}{pm},
\end{eqnarray*}
where $\hat{ \bm{\Lambda}}^{(s)}$ is the estimate of the loading matrix using the $s$th dataset.  We also compare the true positive rate (TPR) and false positive rate (FPR) of the loading matrix over 100 simulations.

\begin{table}
\caption{\label{table:simulation1}Mean squared errors, true positive rates, and false positive rates of estimated factor loadings for Model (A).} 
\centering
\begin{tabular}{llrrrrrrrrrr}
  \hline
 && \multicolumn{3}{c}{$n = 50$} & \multicolumn{3}{c}{$n = 100$}&\multicolumn{3}{c}{$n = 500$}  \\
 && MSE&TPR&FPR & MSE&TPR&FPR & MSE&TPR&FPR    \\
  \hline
AIC & lasso           & 0.10 & 1.00 & 0.56 & 0.04 & 1.00 & 0.55 & 0.01 & 1.00 & 0.55 \\
    & MC               & 0.07 & 1.00 & 0.24 & 0.02 & 1.00 & 0.14 & 0.00 & 1.00 & 0.14 \\
    & prenet$_1$       & 0.05 & 1.00 & 0.14 & 0.02 & 1.00 & 0.12 & 0.00 & 1.00 & 0.12 \\
    & prenet$_{.01}$   & 0.04 & 1.00 & 0.06 & 0.02 & 1.00 & 0.06 & 0.00 & 1.00 & 0.06 \\
BIC & lasso            & 0.11 & 1.00 & 0.47 & 0.06 & 1.00 & 0.38 & 0.01 & 1.00 & 0.36 \\
    & MC               & 0.07 & 1.00 & 0.17 & 0.02 & 1.00 & 0.07 & 0.00 & 1.00 & 0.00 \\
    & prenet$_1$       & 0.04 & 1.00 & 0.04 & 0.01 & 1.00 & 0.01 & 0.00 & 1.00 & 0.00 \\
    & prenet$_{.01}$   & 0.03 & 1.00 & 0.01 & 0.01 & 1.00 & 0.00 & 0.00 & 1.00 & 0.00 \\
EBIC & lasso            & 0.59 & 0.84 & 0.21 & 0.11 & 1.00 & 0.22 & 0.03 & 1.00 & 0.22 \\
    & MC               & 0.32 & 0.92 & 0.11 & 0.04 & 1.00 & 0.06 & 0.01 & 1.00 & 0.00 \\
    & prenet$_1$       & 0.03 & 1.00 & 0.00 & 0.01 & 1.00 & 0.00 & 0.00 & 1.00 & 0.00 \\
    & prenet$_{.01}$   & 0.03 & 1.00 & 0.00 & 0.01 & 1.00 & 0.00 & 0.00 & 1.00 & 0.00 \\
       \hline
\end{tabular}
\end{table}

\begin{table}
\caption{\label{table:simulation2}Mean squared errors, true positive rates, and false positive rates of estimated factor loadings for Model (B).} 
\centering
\begin{tabular}{llrrrrrrrrrr}
  \hline
 && \multicolumn{3}{c}{$n = 50$} & \multicolumn{3}{c}{$n = 100$}&\multicolumn{3}{c}{$n = 500$}  \\
 && MSE&TPR&FPR & MSE&TPR&FPR & MSE&TPR&FPR    \\
  \hline
AIC & lasso           & 0.26 & 0.88 & --- & 0.17 & 0.90 & --- & 0.16 & 0.90 & --- \\
    & MC               & 0.30 & 0.76 & --- & 0.22 & 0.80 & --- & 0.20 & 0.80 & --- \\
    & prenet$_1$       & 0.27 & 0.77 & --- & 0.16 & 0.88 & --- & 0.16 & 0.90 & --- \\
    & prenet$_{.01}$   & 0.23 & 0.82 & --- & 0.05 & 0.98 & --- & 0.01 & 1.00 & --- \\
BIC & lasso            & 0.27 & 0.83 & --- & 0.16 & 0.88 & --- & 0.15 & 0.89 & --- \\
    & MC               & 0.30 & 0.70 & --- & 0.24 & 0.72 & --- & 0.20 & 0.77 & --- \\
    & prenet$_1$       & 0.28 & 0.63 & --- & 0.20 & 0.70 & --- & 0.15 & 0.88 & --- \\
    & prenet$_{.01}$   & 0.31 & 0.54 & --- & 0.17 & 0.65 & --- & 0.01 & 1.00 & --- \\
EBIC& lasso            & 0.35 & 0.78 & --- & 0.18 & 0.85 & --- & 0.15 & 0.88 & --- \\
    & MC               & 0.30 & 0.67 & --- & 0.24 & 0.70 & --- & 0.20 & 0.77 & --- \\
    & prenet$_1$       & 0.25 & 0.51 & --- & 0.22 & 0.52 & --- & 0.15 & 0.86 & --- \\
    & prenet$_{.01}$   & 0.25 & 0.50 & --- & 0.21 & 0.50 & --- & 0.02 & 0.98 & --- \\        \hline
    \end{tabular}
\end{table}

\begin{table}
\caption{\label{table:simulation3}Mean squared errors, true positive rates, and false positive rates of estimated factor loadings for Model (C).} 
\centering
\begin{tabular}{llrrrrrrrrrr}
  \hline
 && \multicolumn{3}{c}{$n = 50$} & \multicolumn{3}{c}{$n = 100$}&\multicolumn{3}{c}{$n = 500$}  \\
 && MSE&TPR&FPR & MSE&TPR&FPR & MSE&TPR&FPR    \\
  \hline
AIC & lasso           & 0.14 & 1.00 & 0.85 & 0.07 & 1.00 & 0.85 & 0.02 & 1.00 & 0.85 \\
    & MC               & 0.06 & 1.00 & 0.43 & 0.02 & 1.00 & 0.21 & 0.00 & 1.00 & 0.08 \\
    & prenet$_1$       & 0.02 & 1.00 & 0.04 & 0.01 & 1.00 & 0.02 & 0.00 & 1.00 & 0.03 \\
    & prenet$_{.01}$   & 0.01 & 1.00 & 0.00 & 0.01 & 1.00 & 0.00 & 0.00 & 1.00 & 0.00 \\
BIC & lasso            & 0.35 & 1.00 & 0.52 & 0.24 & 1.00 & 0.51 & 0.09 & 1.00 & 0.52 \\
    & MC               & 0.07 & 1.00 & 0.41 & 0.02 & 1.00 & 0.19 & 0.00 & 1.00 & 0.00 \\
    & prenet$_1$       & 0.01 & 1.00 & 0.00 & 0.01 & 1.00 & 0.00 & 0.00 & 1.00 & 0.00 \\
    & prenet$_{.01}$   & 0.01 & 1.00 & 0.00 & 0.01 & 1.00 & 0.00 & 0.00 & 1.00 & 0.00 \\
EBIC & lasso            & 0.91 & 0.49 & 0.06 & 0.48 & 0.98 & 0.13 & 0.22 & 1.00 & 0.16 \\
    & MC               & 0.91 & 0.52 & 0.03 & 0.50 & 0.99 & 0.04 & 0.00 & 1.00 & 0.00 \\
    & prenet$_1$       & 0.01 & 1.00 & 0.00 & 0.01 & 1.00 & 0.00 & 0.00 & 1.00 & 0.00 \\
    & prenet$_{.01}$   & 0.01 & 1.00 & 0.00 & 0.01 & 1.00 & 0.00 & 0.00 & 1.00 & 0.00 \\
       \hline
\end{tabular}
\end{table}

\begin{table}
\caption{\label{table:simulation4}Mean squared errors, true positive rates, and false positive rates of estimated factor loadings for Model (D).} 
\centering
\begin{tabular}{llrrrrrrrrrr}
  \hline
 && \multicolumn{3}{c}{$n = 50$} & \multicolumn{3}{c}{$n = 100$}&\multicolumn{3}{c}{$n = 500$}  \\
 && MSE&TPR&FPR & MSE&TPR&FPR & MSE&TPR&FPR    \\
  \hline
AIC & lasso           & 0.28 & 1.00 & 0.93 & 0.27 & 1.00 & 0.94 & 0.25 & 1.00 & 0.94 \\
    & MC               & 0.17 & 0.99 & 0.63 & 0.13 & 0.99 & 0.51 & 0.05 & 1.00 & 0.19 \\
    & prenet$_1$       & 0.27 & 1.00 & 0.92 & 0.24 & 1.00 & 0.92 & 0.43 & 1.00 & 0.92 \\
    & prenet$_{.01}$   & 0.66 & 1.00 & 0.99 & 0.61 & 1.00 & 0.99 & 0.50 & 1.00 & 1.00 \\
BIC & lasso            & 0.32 & 0.99 & 0.91 & 0.29 & 1.00 & 0.91 & 0.25 & 1.00 & 0.91 \\
    & MC               & 0.19 & 0.99 & 0.62 & 0.11 & 0.99 & 0.48 & 0.05 & 1.00 & 0.15 \\
    & prenet$_1$       & 0.35 & 0.99 & 0.88 & 0.30 & 0.99 & 0.87 & 0.44 & 0.99 & 0.87 \\
    & prenet$_{.01}$   & 0.66 & 1.00 & 0.99 & 0.60 & 1.00 & 0.99 & 0.50 & 1.00 & 1.00 \\
EBIC & lasso            & 0.84 & 0.97 & 0.63 & 0.54 & 0.99 & 0.75 & 0.24 & 1.00 & 0.83 \\
    & MC               & 0.22 & 0.99 & 0.61 & 0.12 & 0.99 & 0.47 & 0.03 & 1.00 & 0.13 \\
    & prenet$_1$       & 1.31 & 0.43 & 0.10 & 0.66 & 0.97 & 0.56 & 0.29 & 0.99 & 0.71 \\
    & prenet$_{.01}$   & 1.32 & 0.41 & 0.09 & 0.62 & 1.00 & 0.98 & 0.50 & 1.00 & 0.99 \\
       \hline
\end{tabular}
\end{table}

We obtain the following empirical observations for each simulation model:
\begin{description}
\item[Model (A):] In almost all cases, the prenet penalty outperforms the lasso and MC in terms of both MSE and TPR.  For example, when $n =50$, the EBIC based on lasso and MC tends to select too simple models;  the estimated model is often one-factor model, which is completely different from the true loading matrix.   For the prenet penalty, the EBIC may select simple models (like the lasso), but it performs very well.  This is because the prenet penalty estimates a model that possesses the perfect simple structure for large $\rho$.
\item[Model (B):] The prenet with $\gamma = 0.01$ outperforms the other methods, as seen in Section \ref{sec:illustration}.  In particular, when $n =500$, the prenet with $\gamma = 0.01$ performs very well irrespective of the model selection criteria.
\item[Model (C):] The result is similar to that of Model (A).  With high-dimensional data, the MC tends to perform much better than the lasso.  The performance of the prenet penalty is almost independent of $\gamma$.
\item[Model (D):] The prenet penalty performs worse than the lasso-type regularization, because the true loading matrix is far from the perfect simple structure.  In particular, when $\gamma = 0.01$, the prenet performs poorly. \end{description}

\section{Real data analyses}
\subsection{Big five personality traits}\label{sec.big5}
The first example is the survey data regarding the big five personality traits collected from Open Source Psychometrics Project \citep{open2011}.  8582 responders in the US region are asked to assess their own personality based on 50 questions developed by \citet{goldberg1992development}.  Each question asks how well it describes the statement of the responders on a scale of 1--5.   It is well-known that the personality is characterized by five common factors: openness to experience, conscientiousness, extraversion, agreeableness, and neuroticism.  We investigate whether these five personality traits can be properly extracted by using the prenet penalization. 

First, we apply the prenet penalization and the varimax rotation with maximum likelihood estimate, and compare the loading matrices estimated by these two methods.  With the prenet penalization, we choose tuning parameters which achieve the perfect simple structure ($\lambda= 0.74$, $\gamma=1.0$).  The heatmap of the loading matrices are shown in Figure \ref{fig:big5_1}.   
\begin{figure}[!t]
\centering
\includegraphics[height=7.0cm]{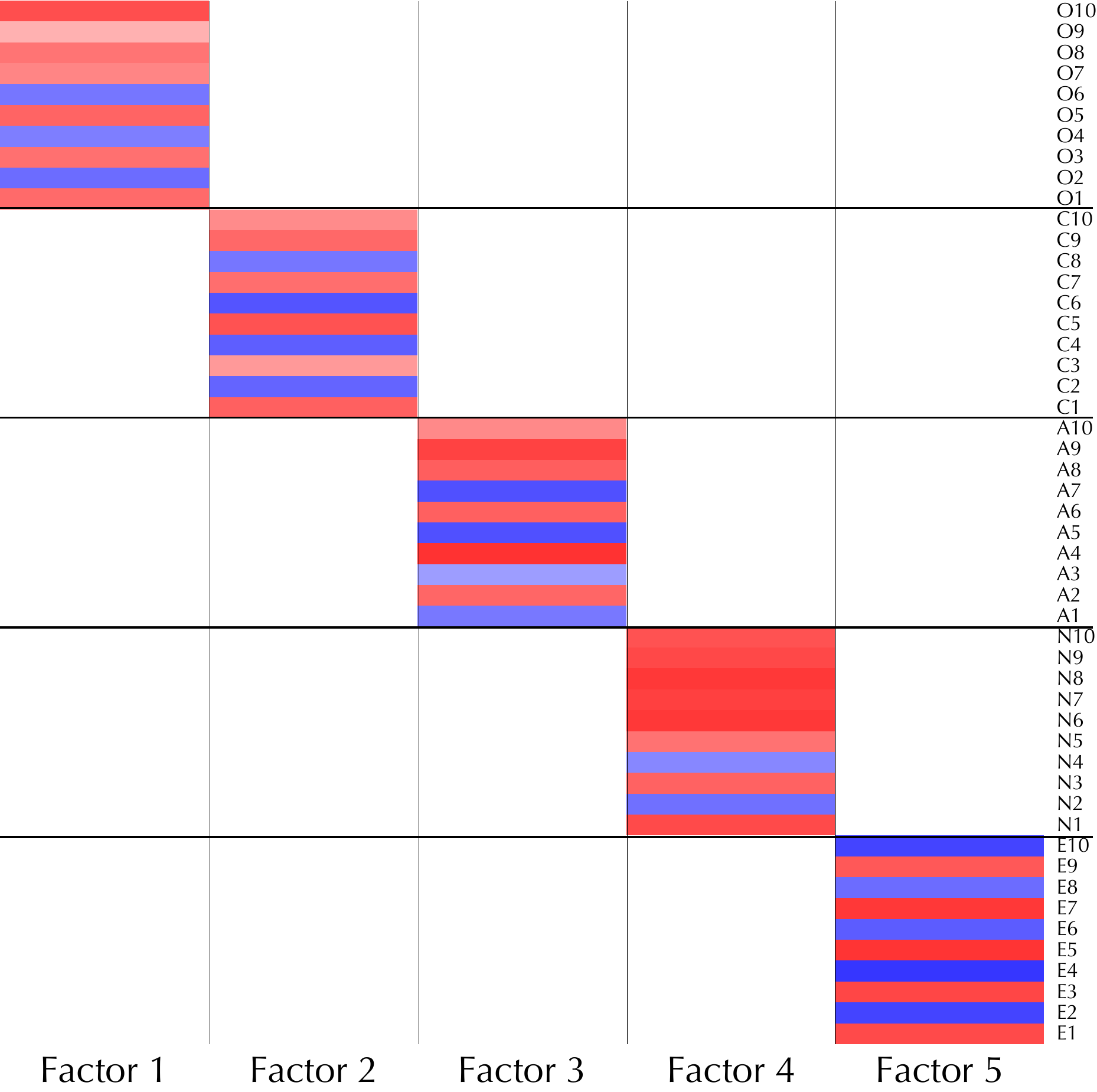}\hspace{5mm}
\includegraphics[height=7.0cm]{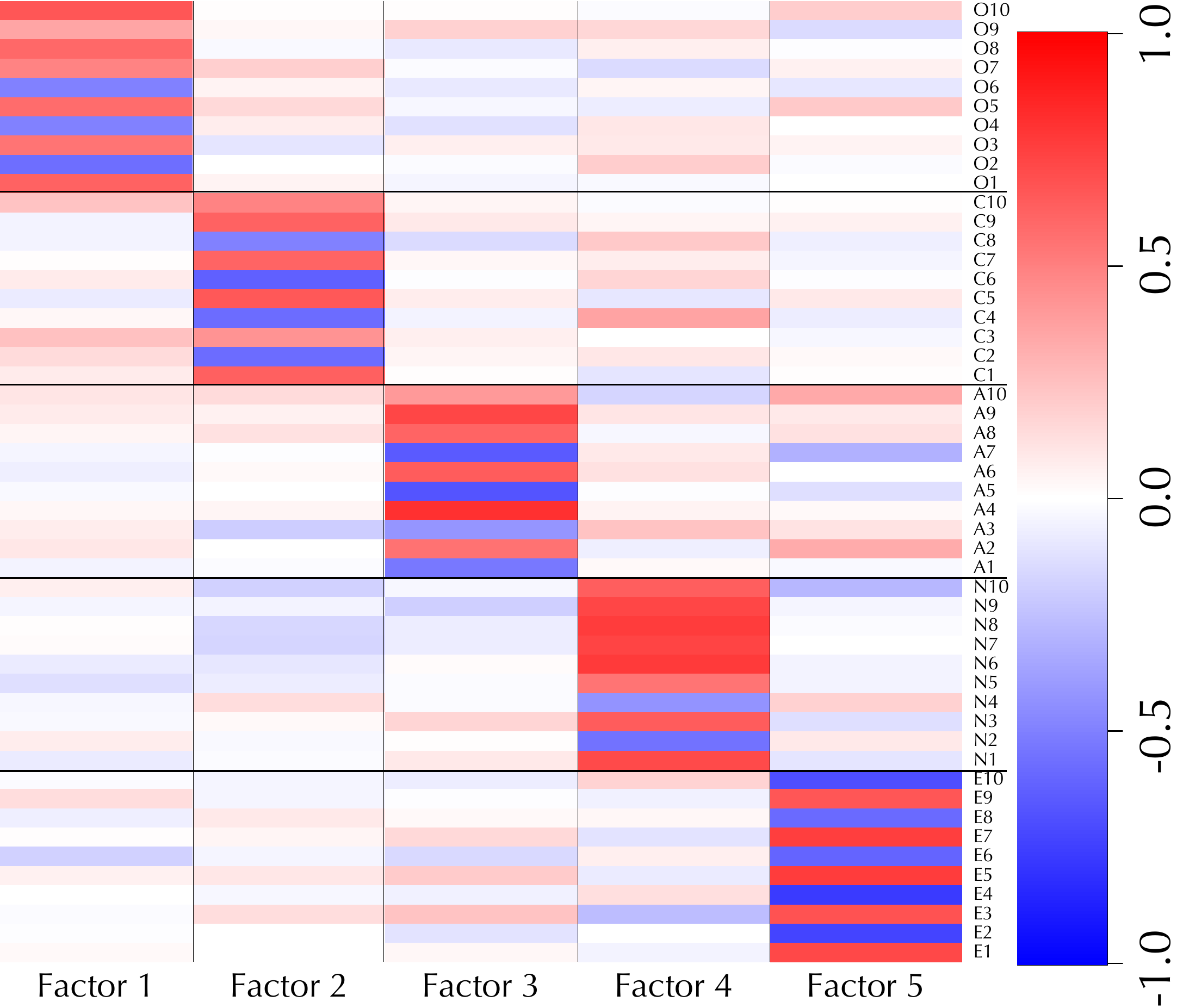}\hspace{5mm}
  \caption{Heatmap of the loading matrices on big five personality traits data.  The left panel corresponds to the prenet penalization with $\lambda=0.74$ and $\gamma=1.0$, and the right panel corresponds to the varimax rotation.  Each cell corresponds to the factor loading, and the depth of color indicates the magnitude of the value of the factor loading.}	 
\label{fig:big5_1}
\end{figure}

The result of Figure \ref{fig:big5_1} shows that the prenet penalization is able to produce a sufficiently sparse loading matrix which allows a clear interpretation of the five personality traits.  A loading matrix estimated by the varimax rotation is not sufficiently sparse but can be appropriately interpreted.  To investigate how well the estimated models are fitted to data, the values of goodness-of-fit (GOF) indices are compared.  The results are 
SRMR = 0.110, RMSEA = 0.241, and CFI = 0.733 for the prenet penalization, and SRMR = 0.032, RMSEA = 0.105, and CFI = 0.846 
for the varimax rotation.  Indicators of good model fits are SRMR $\leq$ 0.05, RMSEA $\leq$ 0.08, and CFI $\geq$ 0.90 \citep{hu1999cutoff}.  The GOF indices of varimax rotation are better than those of prenet penalization.

However, it is seen that the prenet penalization performs relatively well in terms of prediction of future data and interpretation of five personality traits. Figure \ref{fig:big5_boxplot} depicts boxplots of negative log-likelihood value $\ell_{\rm ML}(\bm{\Lambda},\bm{\Psi}) $ in (\ref{taisuuyuudo}) (left panel) and degrees of sparsity (i.e., proportion of nonzero values, right panel) for $n$ random subsampled data with $n=100$, $n=200$, $n=500$, and $n=1000$.   Tuning parameters in the prenet penalty are selected by the BIC.  The boxplots are constructed by 100 simulations based on the subsampling.  To calculate the value of negative log-likelihood, the subsampled data are split into a training set and a test set; the parameter estimation is done by training data and the negative log-likelihood value is calculated with test data.  The heatmaps of mean of the loading matrices are shown in the right panel when $n=100$ and $n=1000$.  These heatmaps are depicted so that the estimated loading matrix $\hat{\bm{\Lambda}}$ is set as close to the varimax rotation with full dataset (i.e., the right panel of Figure \ref{fig:big5_1}) as possible by changing the column and the sign of column of  $\hat{\bm{\Lambda}}$.
\begin{figure}[!t]
\centering
\includegraphics[width=16.0cm]{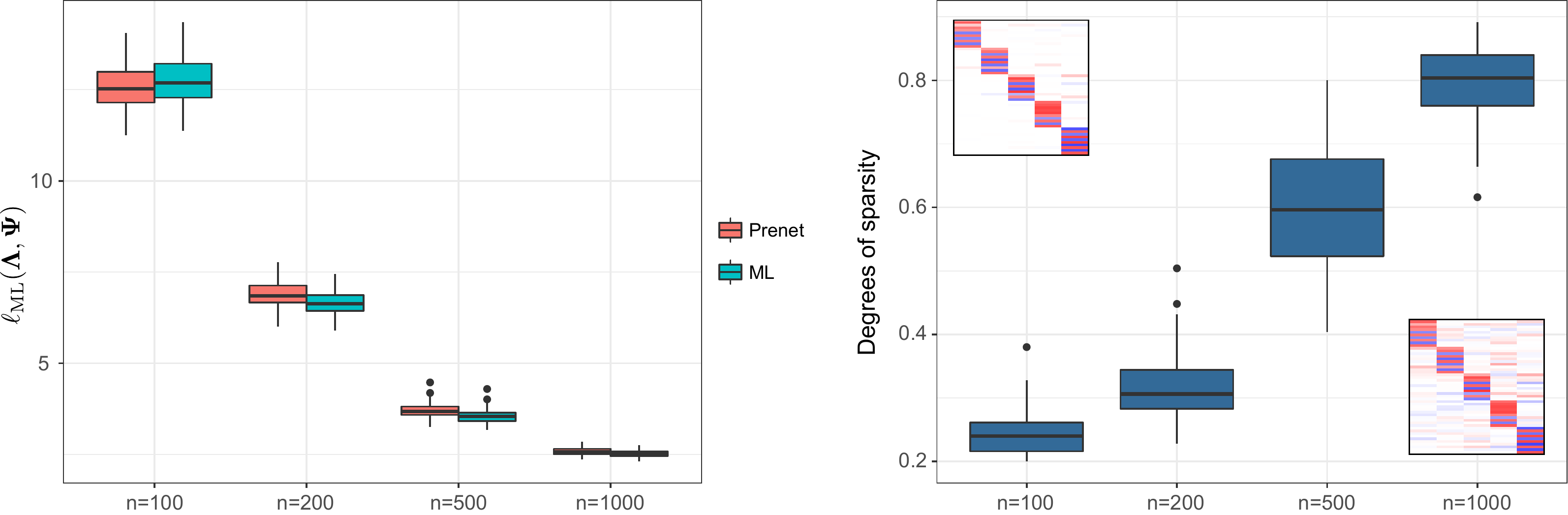}\hspace{5mm}
  \caption{Boxplots of negative log-likelihood value $\ell_{\rm ML}(\bm{\Lambda},\bm{\Psi}) $ in (\ref{taisuuyuudo}) (left panel) and degrees of sparsity (i.e., proportion of nonzero values, right panel) for $n$ random subsampled data with $n=100$, $n=200$, $n=500$, and $n=1000$.  The boxplots are constructed by 100 simulations based on the subsampling.  The heatmaps of mean of the loading matrices are depicted in the right panel when $n=100$ and $n=1000$. }	 
\label{fig:big5_boxplot}
\end{figure}

The left panel of figure \ref{fig:big5_boxplot} shows that both prenet penalization and ML result in similar values of  $\ell_{\rm ML}(\bm{\Lambda},\bm{\Psi}) $, which implies the prenet penalization is comparable to the ML.   In particular, when $n=100$, the prenet penalization slightly outperforms the ML.   The right panel of figure \ref{fig:big5_boxplot} shows that the prenet tends to produce sparse solution as $n$ becomes small.   Although the degrees of sparsity are different among subsample sizes, two heatmaps of mean of the loading matrices show that the characteristic of five personality traits is assumed to be appropriately extracted for both $n=100$ and $n=1000$.  

Figure \ref{fig:big5_2} depicts the heatmaps of the loading matrices for various values of tuning parameters on the MC penalization and the prenet penalization.  We find the tuning parameters so that the degrees of sparseness (proportion of nonzero values) of the loading matrix are approximately 20\%, 25\%, 40\%, and 50\%.  For the MC penalty, we set $\gamma = \infty$ (i.e., the lasso), $5.0$, $2.0$, and $1.01$.  For prenet penalty, the values of gamma are $\gamma=1.0,$ $0.5,$ and $0.01$.  Each cell describes the elements of the factor loadings as with Figure \ref{fig:big5_1}.
\begin{figure}[!t]
\centering
\includegraphics[width=16.0cm]{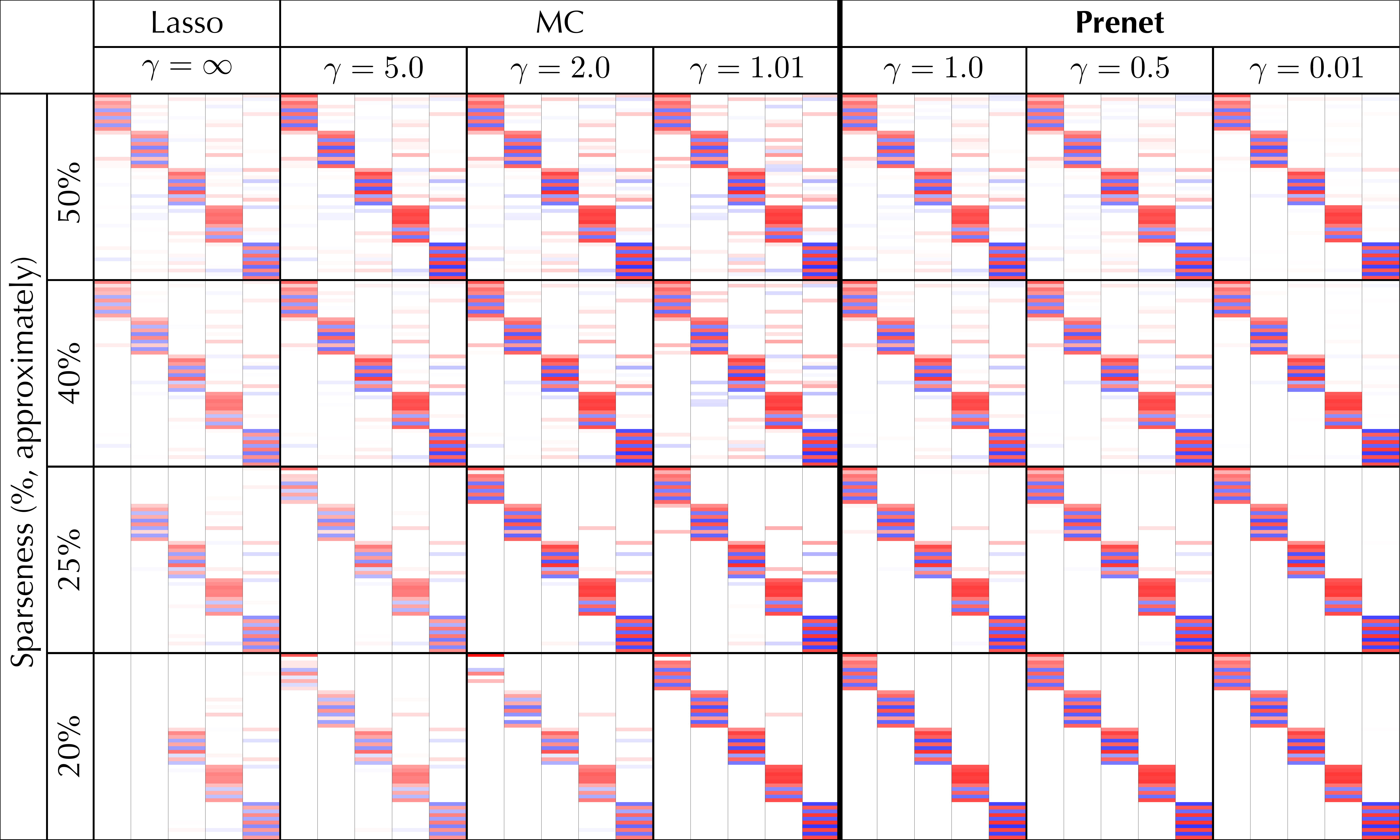}\hspace{5mm}
  \caption{Heatmaps of the loading matrices on big five personality traits data for various values of tuning parameters on the MC penalization and the prenet penalization.}	 
\label{fig:big5_2}
\end{figure}

From Figure \ref{fig:big5_2}, we obtain the empirical observations as follows:
\begin{itemize}
\item With the prenet penalization, the characteristic of five personality traits are appropriately extracted for any values of tuning parameters, which suggests that the prenet penalization is relatively robust against the tuning parameters when the loading matrix is likely to possess the perfect simple structure.  
\item The prenet penalization is able to estimate the perfect simple structure when the degree of sparseness is 20\%.  On the other hand, with the MC penalization, we are not able to estimate the perfect simple structure even when $\gamma$ is sufficiently small.
\item 	With the lasso, the number of factors becomes less than five when the degrees of sparsity are 20\% and 25\%; the five personality traits are not able to found.   When the value of $\gamma$ is not sufficiently large, the MC penalization produces five factor model.   
\item For the MC penalization, the magnitude of the absolute nonzero values becomes large as the value of $\gamma$ decreases for fixed degrees of sparsity; the MC penalization tends to increase the contrast between the zero values and nonzero values as the value of $\gamma$ becomes small.  
\end{itemize}
\subsection{Handwritten digits data}\label{sec.handwritten}
We apply the prenet penalty to well-known handwritten digits data \citep{Hastieetal:2008}.  We select the number ``0," consisting of 1194 observations with 256 pixels (variables).  The variables that have extremely small variances are removed, resulting in 184 variables.

We conduct variables clustering using the prenet, as described in Subsection \ref{sec:pss}.  To our knowledge, variables clustering of image data via factor analysis has not yet been attempted.  The prenet is compared with the $k$-means variables clustering, which is a special case of the prenet, as shown in Section \ref{sec:kmeans}.  The results for $m = 5$, 10, and 15 are depicted in Figure 2.  Color is used to denote a cluster.  When $m = 5$, we make an interesting empirical observation.  With the prenet, the same clusters show left--right symmetry, which means that we tend to write ``0" with left--right symmetry.  As the same clusters could be located in separate places, the cluster structure indicates not only the location of the pixels but also the habits of the people who usually write the letters.  On the other hand, for the $k$-means, the same clusters are located in a circle, and each cluster is characterized by the size of the circle.  The $k$-means clustering tends to assign clusters by the location of the pixels rather than people's writing habits.  Therefore, the prenet might be able to capture a more complex structure than the $k$-means.  When the number of factors (clusters) is large, the prenet and $k$-means produce similar results.
 \begin{figure}[!t]
\centering
\includegraphics[width=4.0cm]{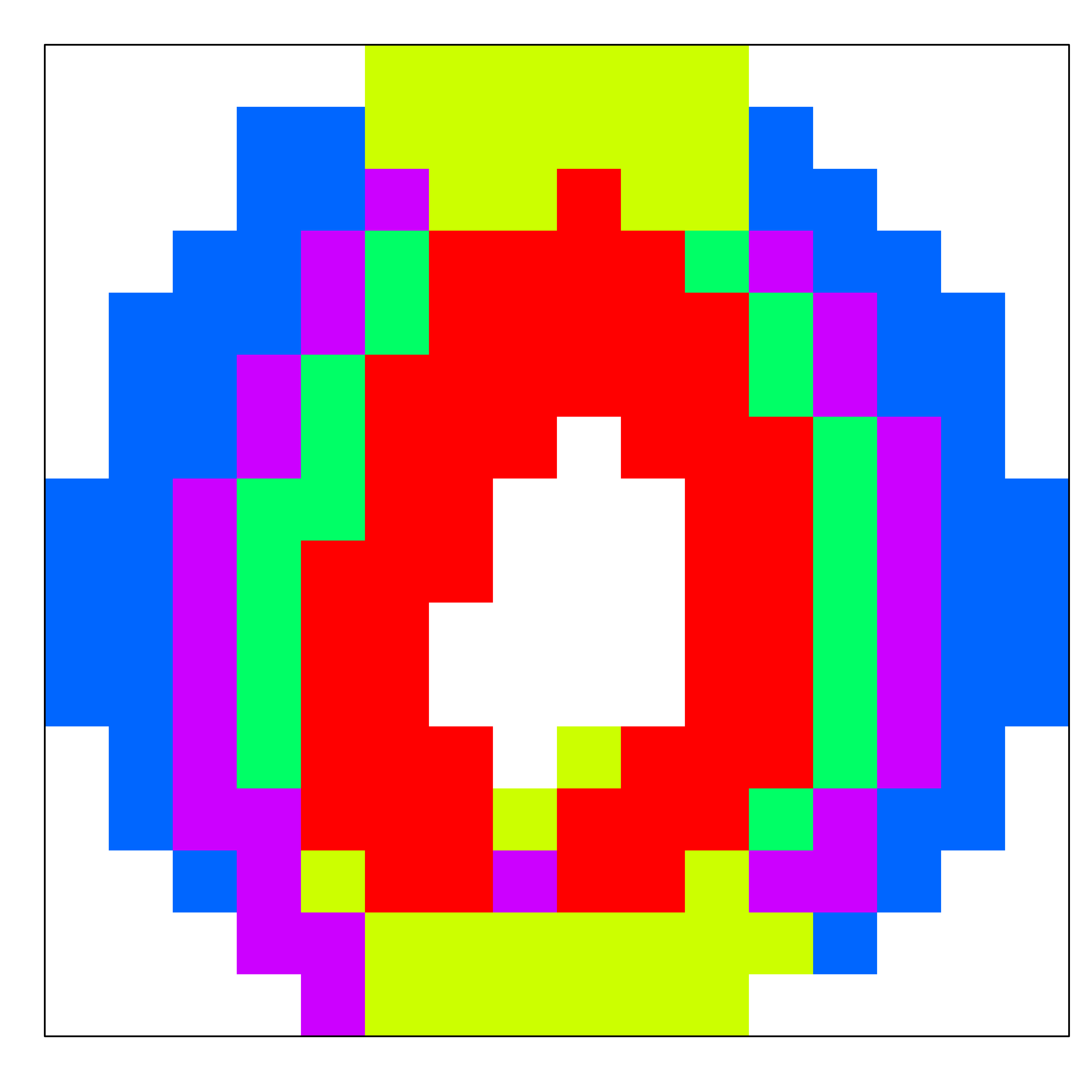}\hspace{5mm}
\includegraphics[width=4.0cm]{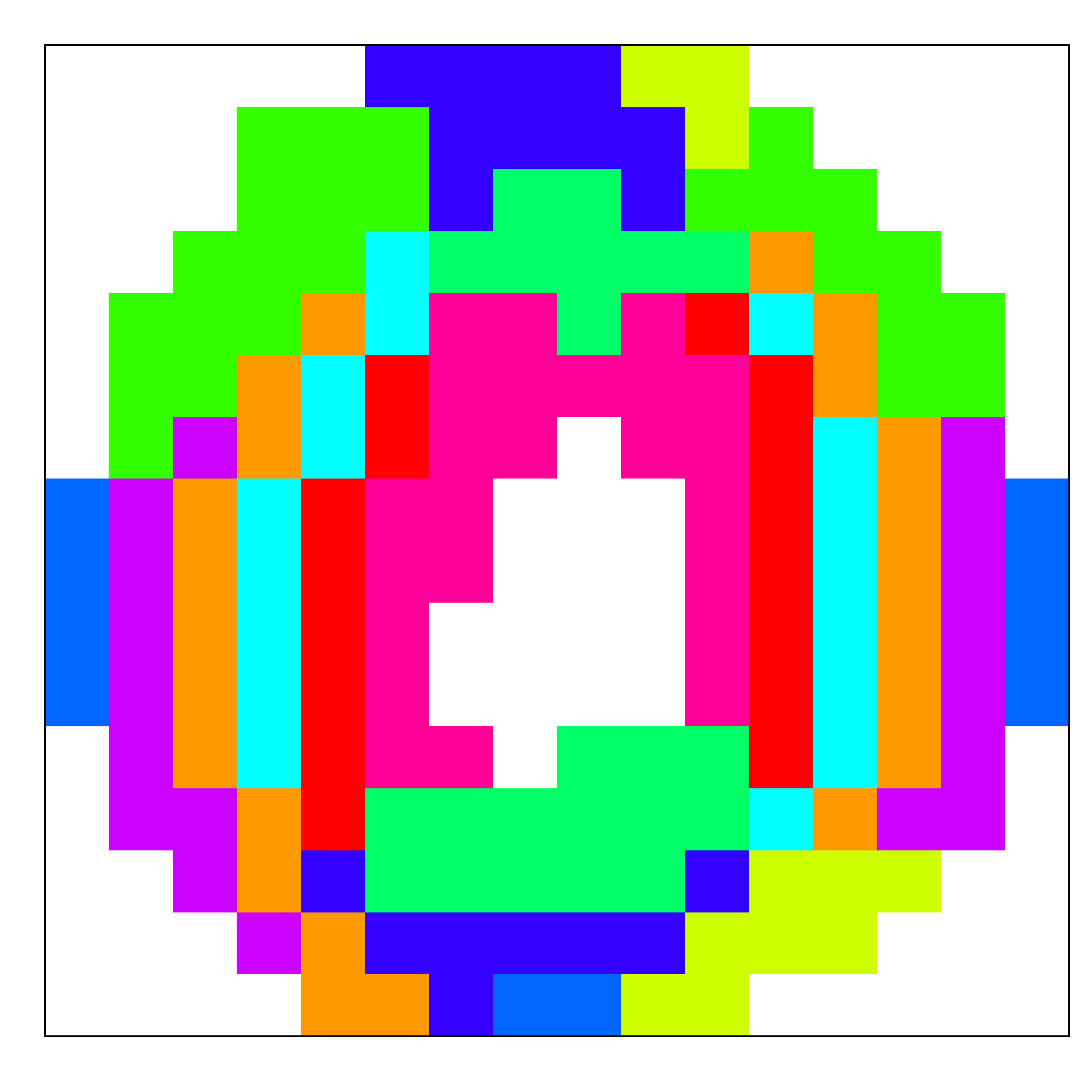}\hspace{5mm}
\includegraphics[width=4.0cm]{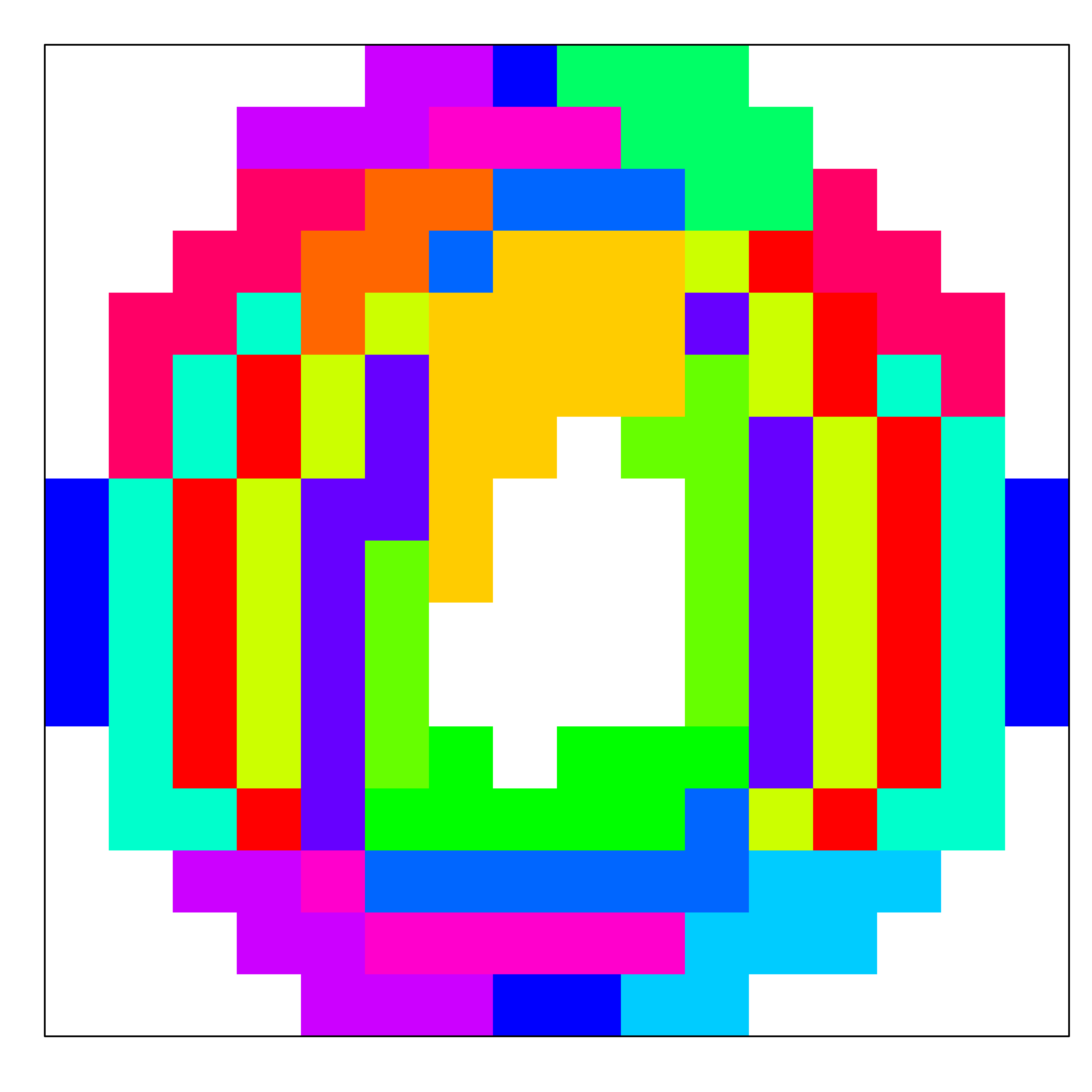}\\
\includegraphics[width=4.0cm]{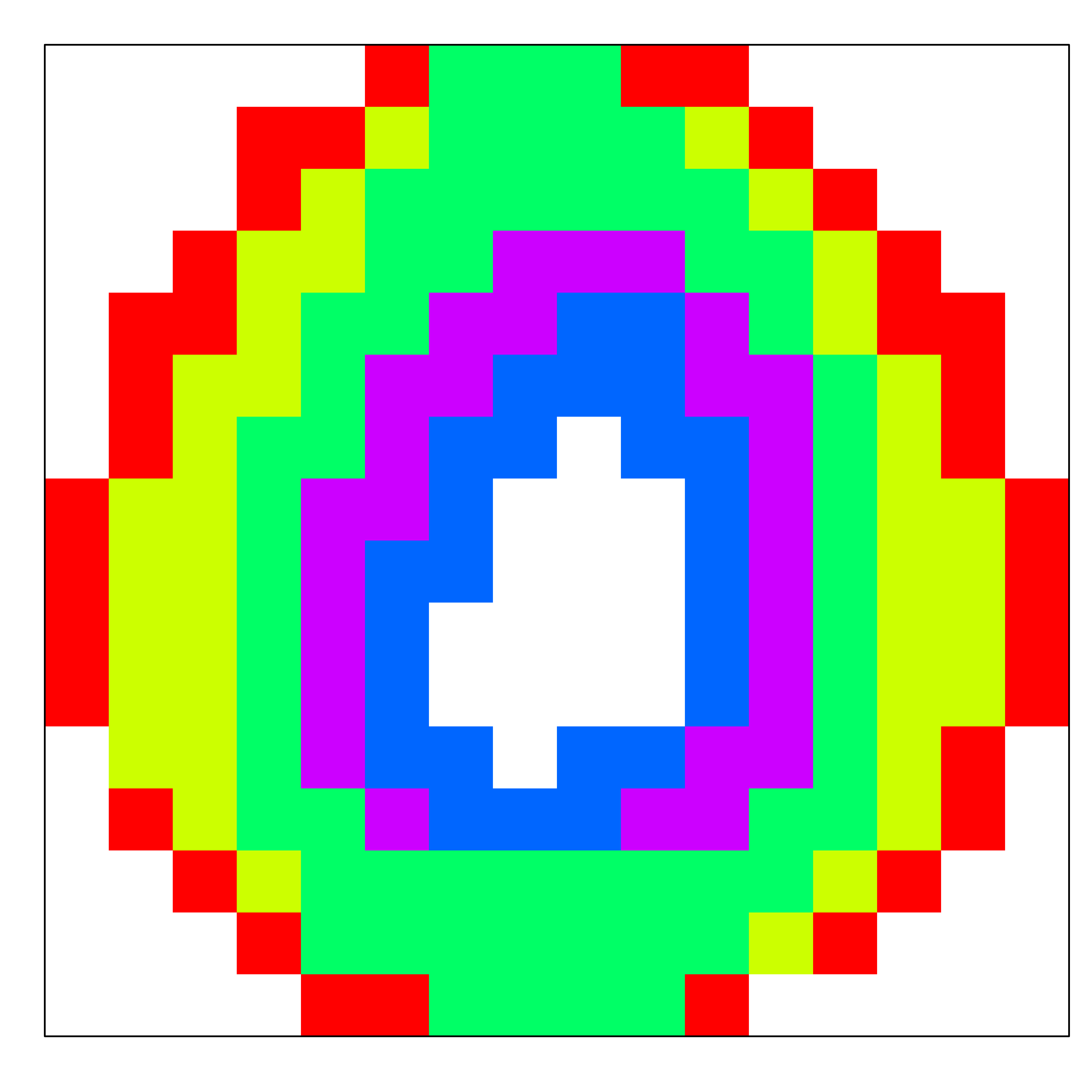}\hspace{5mm}
\includegraphics[width=4.0cm]{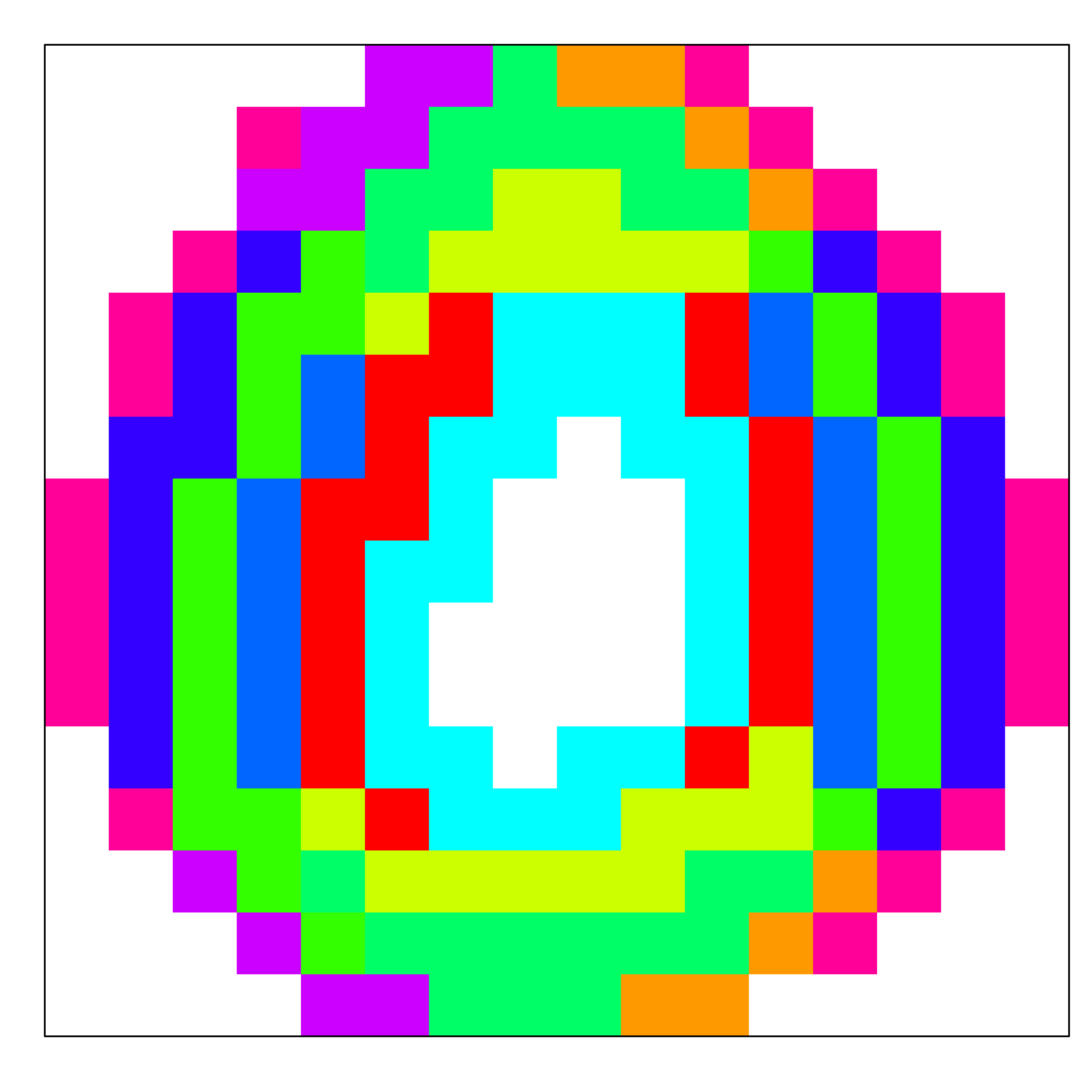}\hspace{5mm}
\includegraphics[width=4.0cm]{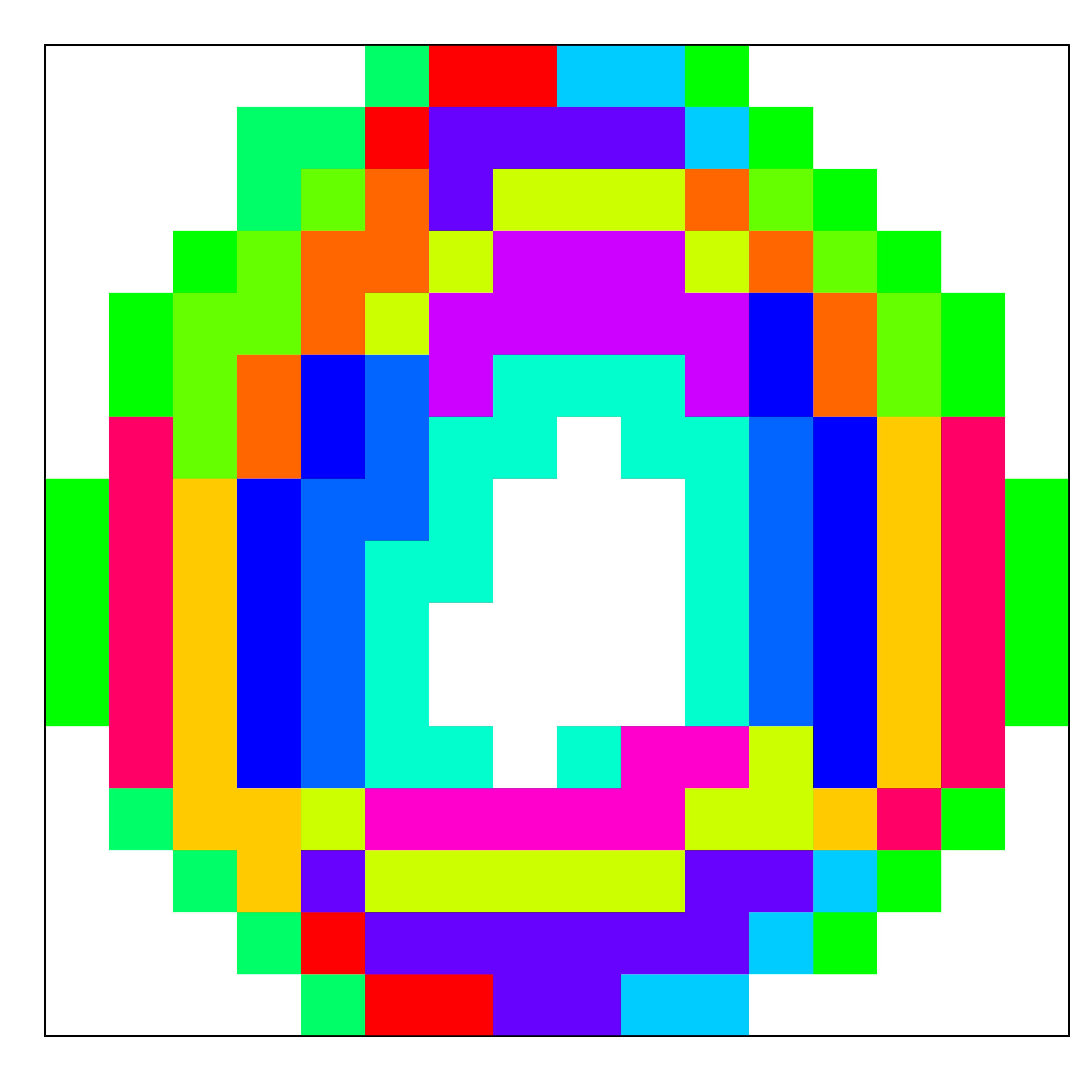}
  \caption{Results for FA (upper panels) and $k$-means (lower panels) when $m = 5$ (left panels), 10 (center panels), and 15 (right panels).}	 \label{fig:penalties}
\end{figure}

\begin{figure}[!t]
\centering
\includegraphics[width=4.5cm]{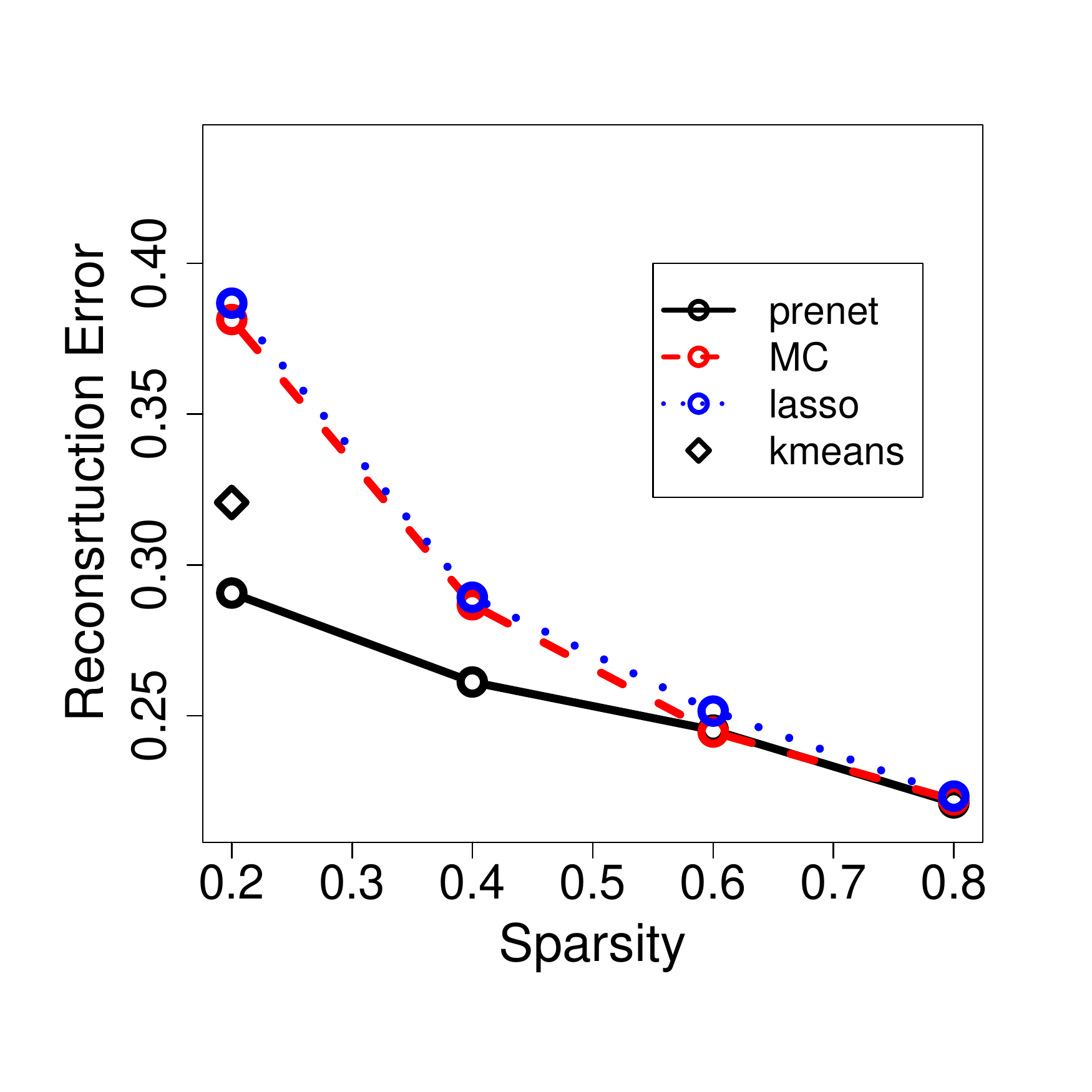}\hspace{2mm}
\includegraphics[width=4.5cm]{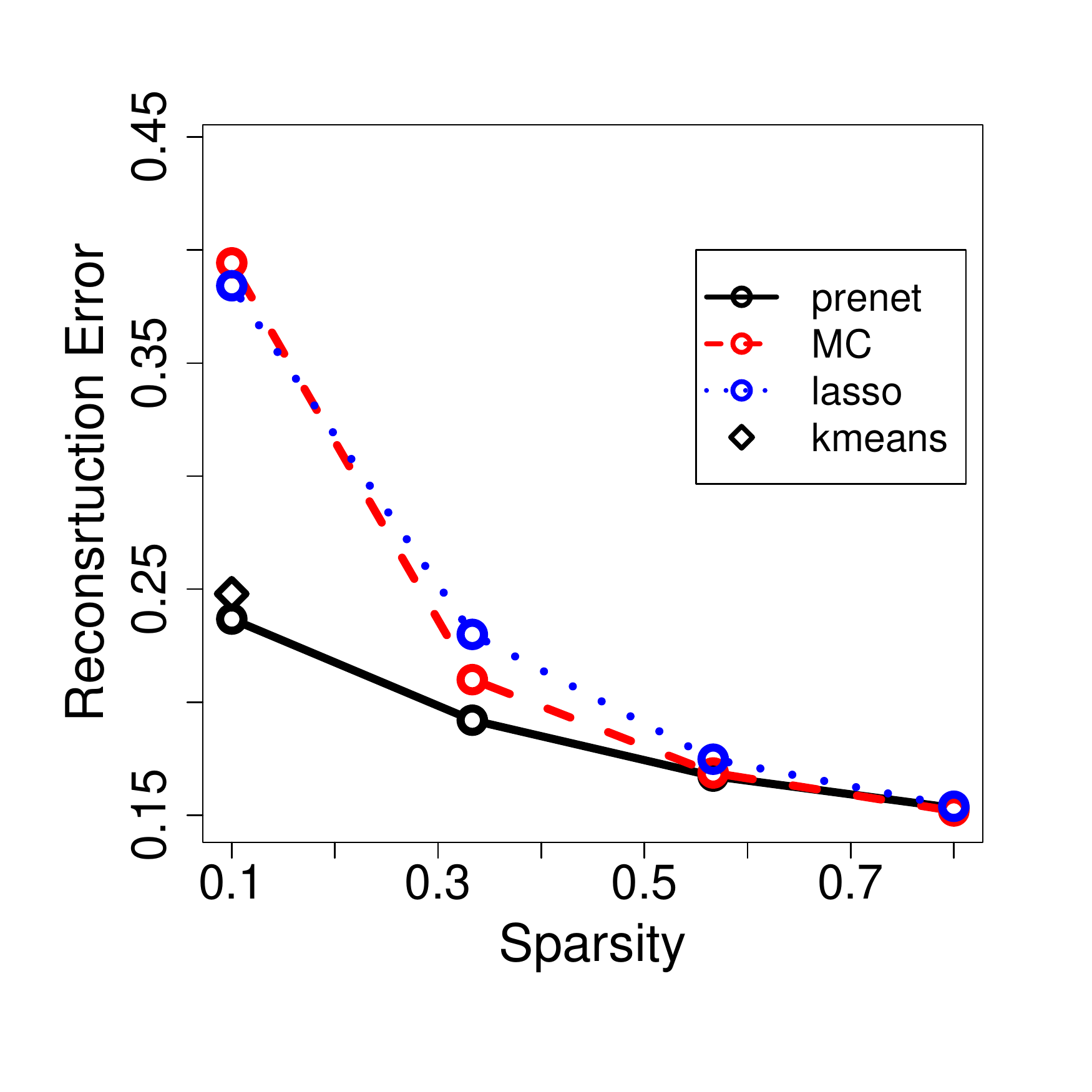}\hspace{2mm}
\includegraphics[width=4.5cm]{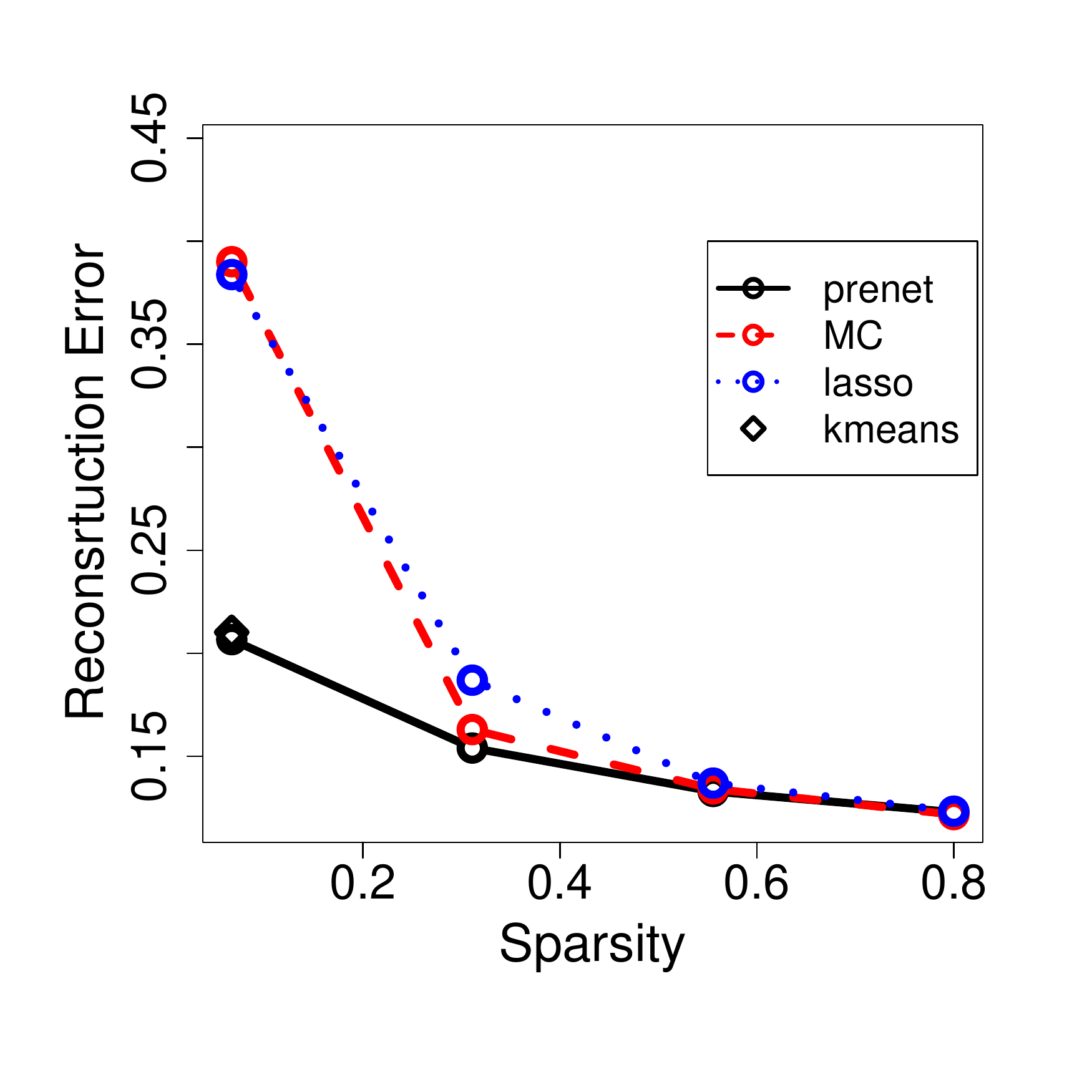}
  \caption{Reconstruction error when the number of factors (clusters) is $5$ (left panel), 10 (center panel), and 15 (right panel). The $x$ axis indicates the degrees of sparsity, and the $y$ axis indicates the reconstruction errors.}\label{fig:reconstruction error}
\end{figure}

We also compare the reconstruction error.  For $k$-means clustering, the data reconstruction of $\bm{x}_t$ is achieved using $\bm{\Lambda}(\bm{\Lambda}^T\bm{\Lambda})^{-1}\bm{\Lambda}^T\bm{x}_t$ $(t=1,\dots,n)$, where $\bm{\Lambda}$ is the estimated loading matrix.  In the prenet penalty, the data are reconstructed via the posterior mean:
\begin{eqnarray*}
\bm{\Lambda}E[\bm{F}_t | \bm{x}_t] = \bm{\Lambda}\bm{M}^{-1}\bm{\Lambda}^T\bm{\Psi}^{-1}\bm{x}_t \quad (t=1,\dots,n).
\end{eqnarray*}
We compress 359 test data with the above two methods and evaluate the performance by the reconstruction error.  We also compare the performance of above-mentioned two methods with that of the lasso and MC penalties.  The result is presented in Figure \ref{fig:reconstruction error}.

In the case of $m = 5$, the prenet penalty performs the best in terms of reconstruction error when the degree of sparsity is 0.2.  The second best method is the $k$-means, which implies the prenet results in a better cluster structure than the $k$-means in terms of reconstruction error.  The sparse estimations, such as the lasso and MC, perform very poorly.  We observe that  the lasso and MC result in a 3-factor model; the last two column vectors of the loading matrix result in $\bm{0}$.  For small degrees of sparsity, it is better to use the prenet penalty.  As the degrees of sparsity increase, the performance of the lasso and MC is competitive to that of the prenet.

When $m$ is large, the performance of the prenet with the sparsest model (i.e., perfect simple structure) is slightly better than that of the $k$-means but almost equivalent.  Interestingly, both lasso and MC perform poorly with small degrees of sparsity.  As the degrees of sparsity increase, the performance of the lasso and MC improve considerably and then become equivalent to that of the prenet.

\subsection{Resting state fMRI data}
In the third real data example, we investigate a cluster structure of brain regions of interest (ROIs) using a resting-state fMRI (rfMRI) data.  
We use a single-subject preprocessed resting-state fMRI data in Human Connectome Project (\url{https://www.humanconnectome.org/}).
The rfMRI data are acquired in a single run of 1200 time points (approximately 15 minutes).  
We view 268 brain regions proposed by \citet{shen2013groupwise} as ROIs, and aggregate the preprocessed voxel-wise rfMRI data into the 268 dimensional ROI-wise time series data by taking an average in each region.  

In this real data analysis, we conduct cluster analysis of the 268 ROIs.  Because the cluster analysis is an unsupervised learning, it is difficult to define a true cluster.  We consider target clusters as 8 clusters defined by \citet{finn2015functional}.   These 8 clusters are interpretable and determined by the group analysis of 126 subjects \citep{finn2015functional}.  
On the other hand, we use a single-subject resting-state fMRI data with 268 regions.   
We conduct a clustering by 
\begin{itemize}
\item Ward's method based on correlations among 268 ROIs,
\item perfect simple structure estimation via prenet penalization with 8 factors.
\end{itemize}
Note that we use $\xi_{ij}=1-|r_{ij}|$ as a dissimilarity between $i$th region and $j$th region on Ward's method,
where $r_{ij}$ is a correlation between time series of $i$th region and that of $j$th region.

Figure~\ref{fig:cluster} shows the clusters defined by \citet{finn2015functional}  and the results of both Ward's method and prenet penalization.
In each subfigure, the colored points are located at the center coordinates of the corresponding ROIs.  
Same color is corresponding to same cluster, so that colors of ROIs represent clusters. 
On the results of Ward's method and prenet penalization, the color combinations are chosen by matching the colors of clusters of \citet{finn2015functional}  as much as possible.
In order to compare these results more precisely, 
we use the adjusted Rand index (ARI), which is a measure of the similarity between two clustering results.  The larger the value of ARI, the higher the similarity between two clustering results is. 
The values of ARI between the two clustering results are given as follows:
\begin{itemize}
\item Ward's method and definition of \citet{finn2015functional}: $0.177$
\item prenet penalization and definition of \citet{finn2015functional}: $0.208$
\end{itemize}
Because the clusters defined by \citet{finn2015functional} are interpretable, the result shows that the prenet penalization may result in more interpretable clusters than the Ward's method.

\begin{figure}[h!]
 \begin{minipage}{0.32\textwidth}
  \centering
   \includegraphics[width=\textwidth]{./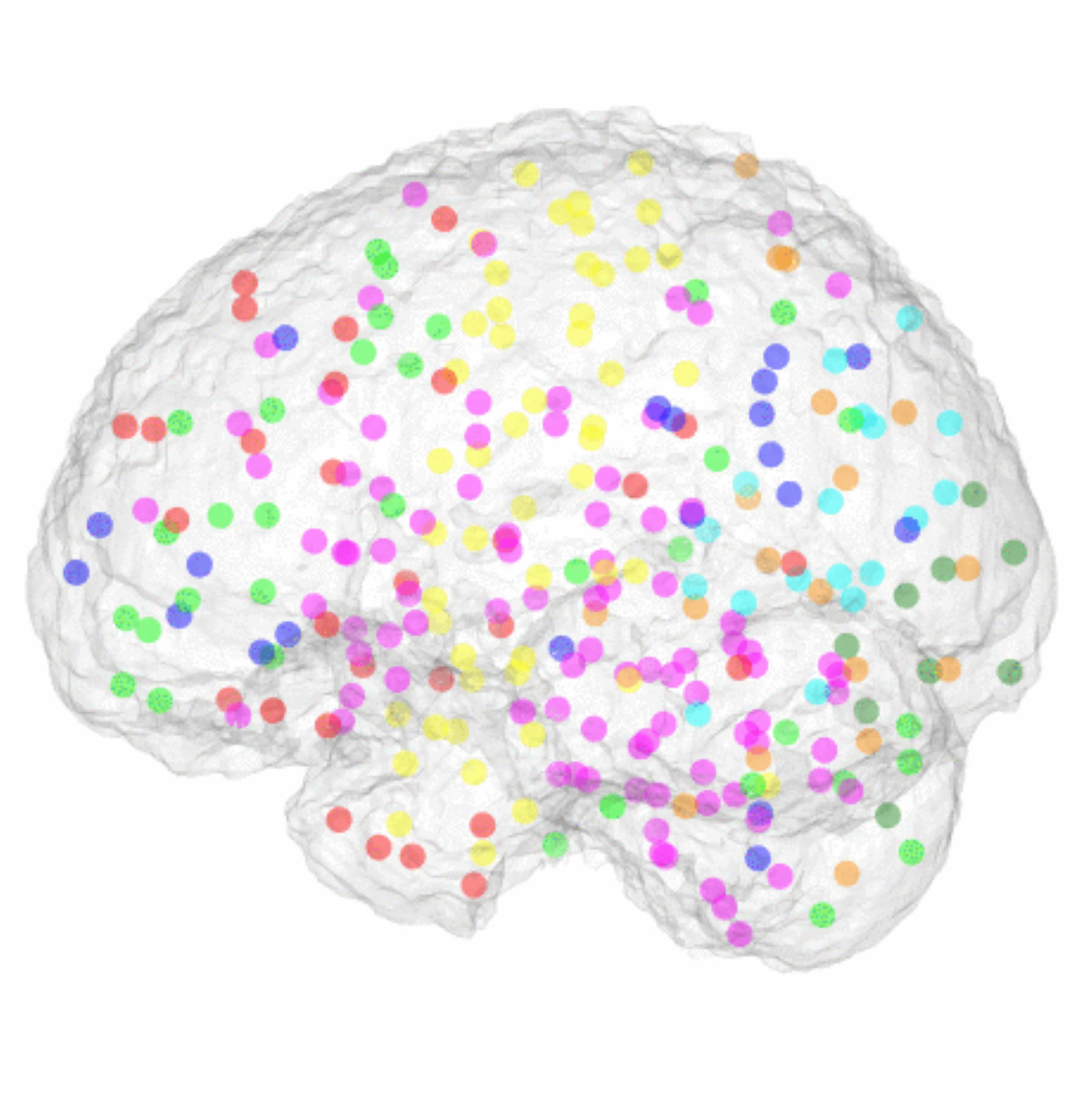}
   \subcaption{Finn et al. (2015)}
 \end{minipage}
 \begin{minipage}{0.32\textwidth}
  \centering
   \includegraphics[width=\textwidth]{./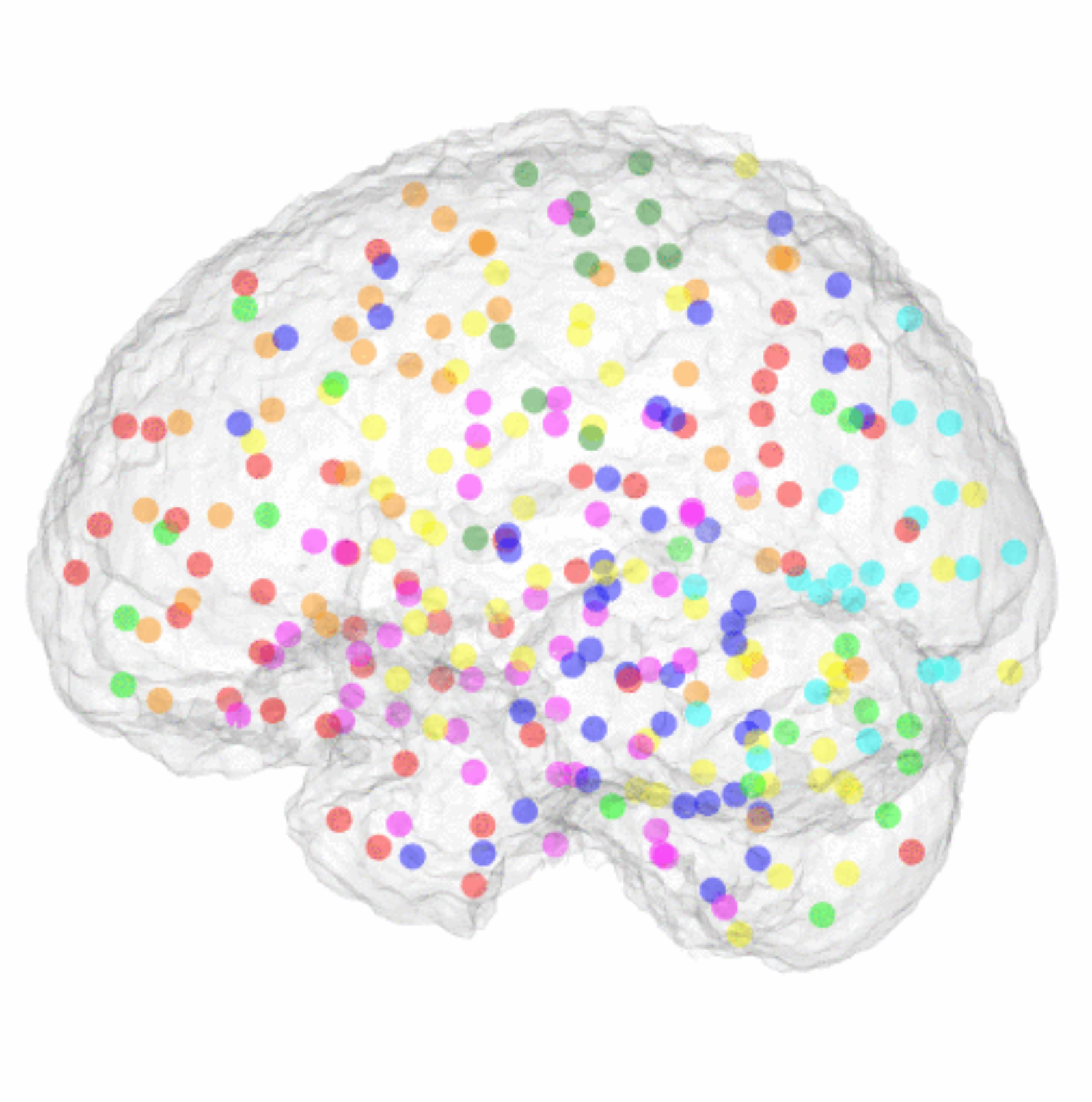}
   \subcaption{Ward's method}
 \end{minipage}
  \begin{minipage}{0.32\textwidth}
  \centering
   \includegraphics[width=\textwidth]{./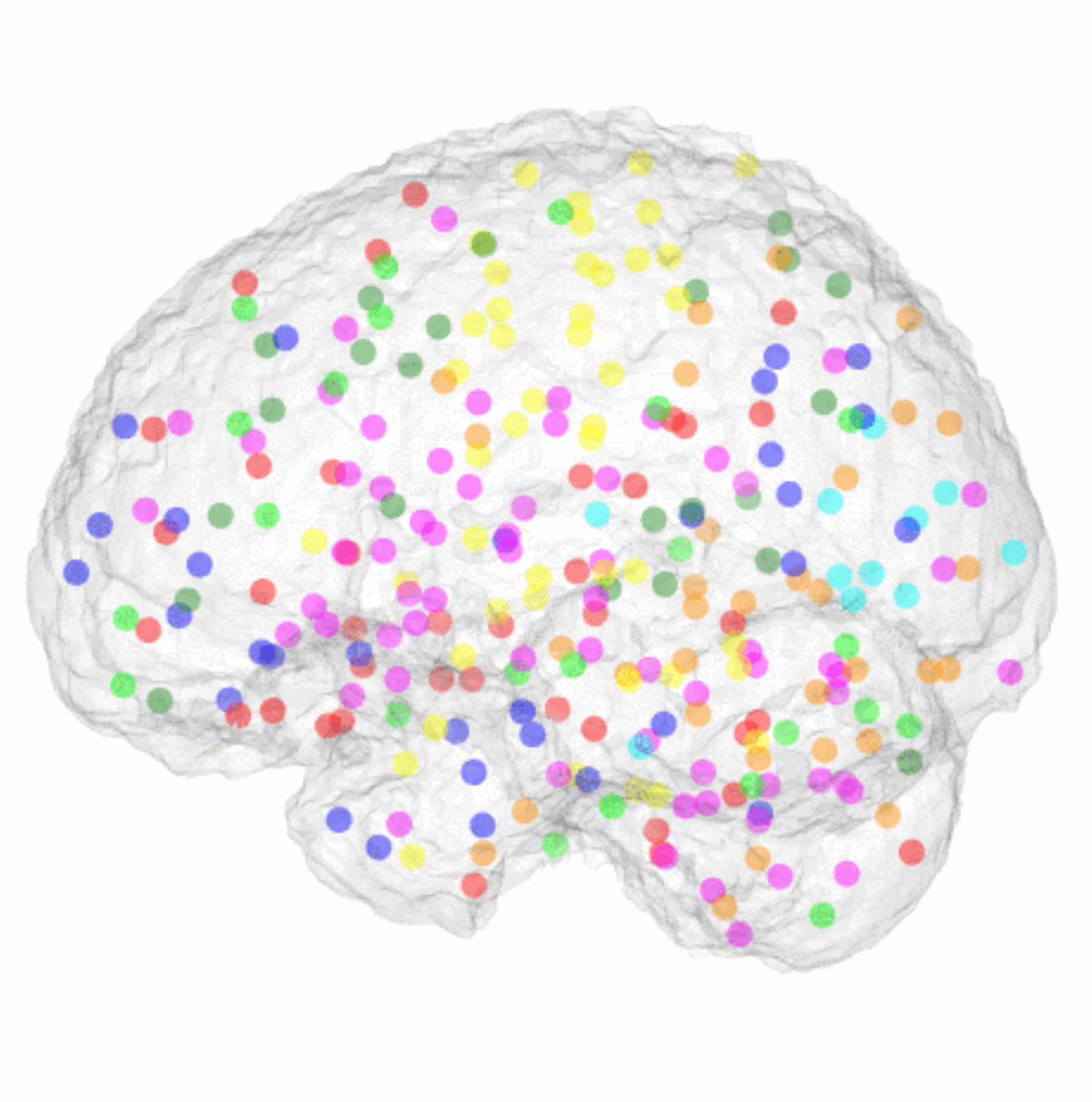}
   \subcaption{Prenet with 8 factors}
    \end{minipage}
   \caption{8 clusters of 268 ROIs.}
  \label{fig:cluster}
\end{figure}

\section{Concluding remarks}
We proposed a prenet penalty, which is based on the product of a pair of parameters in each row of the loading matrix.  The prenet penalty produced  the perfect simple structure for large values of $\rho$, which gave us a new variables clustering method using factor models.  In real data analysis, we showed that the prenet was able to capture a complex latent structure and outperformed the $k$-means in terms of reconstruction error.

The proposed penalty can be applied to any low rank matrix factorization, such as principal component analysis (PCA), non-negative matrix factorization, and so on.  In particular, the orthogonal nonnegative matrix factorization may be related to our method, because it corresponds to the perfect simple structure \citep{ding2005equivalence}.  The sparse PCA \citep{zou2006sparse} also assumes the orthogonality of the loading matrix, but some rows become zero vectors with a large amount of penalty.  It is interesting to apply the prenet penalty to other low rank matrix factorization methods, and compare the performance of the prenet with that of the existing estimation procedures.

The proposed method performed worse than sparse penalization, such as in the case of the MC penalty when the true loading matrix did not possess the perfect simple structure, as shown in Section \ref{sec:simulation}.  As described in \citet{yamamoto2013cluster}, the loading matrix does not always possess the perfect simple structure but it often has a well-clustered structure.  In such a case, a different penalty must be used.   In future research, it would be interesting to introduce a different penalty that captures more complex cluster structure than the perfect simple structure.

\appendix
\renewcommand{\theequation}{A\arabic{equation}}
\def\thesection{Appendix \Alph{section}}
\setcounter{equation}{0}

\section*{Acknowledgments}
The author would like to thank Dr. Michio Yamamoto for his guidance and suggestions.  This work was supported by a grant from Japan Society for the Promotion of Science KAKENHI 15K15949.

\section{Proofs}
\subsection{Proof of Proposition \ref{prop_kmeans}} \label{prop_kmeans:app}
Because of Proposition \ref{prop:pss}, with the prenet, $\hat{\lambda}_{ij}\hat{\lambda}_{ik} = 0$ as $\rho \rightarrow \infty$.  Thus, the prenet solution satisfies (\ref{q_ij_adj}) as $\rho \rightarrow \infty$.  We only need to show that the minimization problem of loss function $\ell_{\rm ML}(\bm{\Lambda},\bm{\Psi})$ is equivalent to that of $\| \bm{S} - \bm{\Lambda}\bm{\Lambda}^T\|^2$.  The inverse covariance matrix of the observed variables is expressed as
\begin{equation*}
\bm{\Sigma}^{-1} = \bm{\Psi}^{-1} - \bm{\Psi}^{-1}\bm{\Lambda}(\bm{\Lambda}^T\bm{\Psi}^{-1}\bm{\Lambda} + \bm{I})^{-1}\bm{\Lambda}^T\bm{\Psi}^{-1}.
\end{equation*}
Because $\bm{\Lambda}^T\bm{\Lambda} = \bm{I}_m$, we obtain
\begin{equation*}
\bm{\Sigma}^{-1} = \alpha^{-1}\bm{I} - \frac{\alpha^{-2}}{\alpha^{-1} + 1}\bm{\Lambda}\bm{\Lambda}^T.
\end{equation*}
The determinant of $\bm{\Sigma}$ can be calculated as
\begin{equation*}
|\bm{\Sigma}| = \alpha^{p-m}(1+\alpha)^m.
\end{equation*}
Then, the discrepancy function in (\ref{taisuuyuudo}) is expressed as
\begin{equation*}
	\frac{1}{2}\left\{ {\rm tr}(\alpha^{-1}\bm{S}) - \frac{\alpha^{-2}}{\alpha^{-1} + 1}{\rm tr}\left( \bm{\Lambda}^T\bm{S}\bm{\Lambda} \right) + p\log \alpha + m \log \left( 1+\frac{1}{\alpha} \right)  - \log|\bm{S}| -p  \right\}.
\end{equation*}
Because $\alpha$ is given and $\| \bm{S} - \bm{\Lambda}\bm{\Lambda}^T\|^2 = -2{\rm tr} \left( \bm{\Lambda}^T\bm{S}\bm{\Lambda} \right) + {\rm const}.$, we can derive (\ref{problem_kmeansgen}).

\subsection{Proof of Proposition \ref{prop_qmin}}\label{qmin:app}
Recall that $\hat{\bm{\theta}}=(\hat{\bm{\Lambda}},\hat{\bm{\Psi}})$ is an unpenalized estimator that satisfies $\displaystyle {\ell}(\hat{\bm{\theta}})=\min_{\bm{\theta} \in \Theta} {\ell}(\bm{\theta})$ and $\hat{\bm{\theta}}_q$ is a quartimin solution obtained by the following problem:
$$
\min_{\bm{\theta} \in \Theta}P_{\rm qmin}(\bm{\Lambda}), \mbox{ subject to} \quad   \ell(\bm{\theta}) = \ell(\hat{\bm{\theta}}).
$$
First, we show that
\begin{equation}
\lim_{n\rightarrow \infty} d(\hat{\bm{\theta}}_q,\Theta_q^\ast)=0\;\;\text{a.s.}\label{eq:qmin}
\end{equation}
From the assumptions, as the same manner of Chapter 6 in \citet{pfanzagl1994parametric}, 
we can obtain the following strong consistency:
\begin{equation}
\lim_{n\rightarrow \infty} d(\hat{\bm{\theta}},\Theta_\ast)=0\;\text{ and }\;\lim_{n\rightarrow \infty} d(\hat{\bm{\theta}}_{\rho_n},\Theta_\ast)=0\quad\text{a.s.} \label{strong consistency}	
\end{equation}
where 
$\Theta_\ast:=\{\bm{\theta}\in \Theta\mid \ell_\ast(\bm{\theta})=\min_{\bm{\theta}\in \Theta}\ell_\ast(\bm{\theta})\}$.  
$\lim_{n\rightarrow \infty} d(\hat{\bm{\theta}},\Theta_\ast)=0$ implies for all $\epsilon>0$, by taking $n$ large enough, we have 
$$
\|\hat{\bm{\Lambda}}-\bm{\Lambda}_\ast\|<\epsilon
\quad\text{a.s.}
$$
 for some $(\bm{\Lambda}_*, \bm{\Psi}_*) \in \Theta_*$.  From the uniform continuity of $P_\mathrm{qmin}$ on $\Theta$ and the fact that $\|\hat{\bm{\Lambda}}\bm{T}-\bm{\Lambda}_\ast \bm{T}\|=\|\hat{\bm{\Lambda}}-\bm{\Lambda}_\ast\|$ for any $\bm{T}\in \mathcal{O}(m)$, 
we have
\begin{equation}
\sup_{\bm{T}\in \mathcal{O}(m)}
|P_\mathrm{qmin}(\hat{\bm{\Lambda}}\bm{T})-P_\mathrm{qmin}(\bm{\Lambda}_\ast\bm{T})|<\epsilon
\quad\text{a.s.}
\label{qmin:sup}	
\end{equation}
Write $\hat{\bm{T}}:=\mathop{\arg\min}_{\bm{T}\in \mathcal{O}(m)}P_\mathrm{qmin}(\hat{\bm{\Lambda}}\bm{T})$ and $\bm{T}_\ast:=\mathop{\arg\min}_{\bm{T}\in \mathcal{O}(m)}P_\mathrm{qmin}(\bm{\Lambda}_\ast\bm{T})$. We have 
$$
P_\mathrm{qmin}(\hat{\bm{\Lambda}}\hat{\bm{T}})-P_\mathrm{qmin}(\bm{\Lambda}_\ast\hat{\bm{T}})
\le
P_\mathrm{qmin}(\hat{\bm{\Lambda}}\hat{\bm{T}})-P_\mathrm{qmin}(\bm{\Lambda}_\ast\bm{T}_\ast)
\le
P_\mathrm{qmin}(\hat{\bm{\Lambda}}\bm{T}_\ast)-P_\mathrm{qmin}(\bm{\Lambda}_\ast\bm{T}_\ast).
$$
From this, it follows that
$$
|P_\mathrm{qmin}(\hat{\bm{\Lambda}}\hat{\bm{T}})-P_\mathrm{qmin}(\bm{\Lambda}_\ast\bm{T}_\ast)|
\le 
\sup_{\bm{T}\in \mathcal{O}(m)}
|P_\mathrm{qmin}(\hat{\bm{\Lambda}}\bm{T})-P_\mathrm{qmin}(\bm{\Lambda}_\ast\bm{T})|.
$$
Thus, using (\ref{qmin:sup}), we obtain (\ref{eq:qmin}).

Next, as the similar manner of Proposition 15.1 in \citet{foucart2013mathematical}, 
we prove $\lim_{n\rightarrow \infty} d(\hat{\bm{\theta}}_{\rho_n},\Theta_q^\ast)=0\;\;\text{a.s.}$
By the definition of $\hat{\bm{\theta}}_{\rho_n}$, for any $\rho_n>0$ we have 
\begin{equation}
\ell(\hat{\bm{\theta}}_{{\rho_n}}) + \rho_n P_{\rm qmin}(\hat{\bm{\Lambda}}_{{\rho_n}}) \le \ell(\hat{\bm{\theta}}_q) +\rho_n P_{\rm qmin}(\hat{\bm{\Lambda}}_q) \label{ineq}
\end{equation}
and 
\begin{equation}
	\ell(\hat{\bm{\theta}}_{{\rho_n}}) \ge \ell(\hat{\bm{\theta}}_q). \label{ineq2}
\end{equation}
Combining (\ref{eq:qmin},\ref{ineq},\ref{ineq2}), we obtain 
\begin{equation}
    	 P_{\rm qmin}(\hat{\bm{\Lambda}}_{{\rho_n}})\le P_{\rm qmin}(\hat{\bm{\Lambda}}_q)\rightarrow 
    	 P_\mathrm{qmin}(\bm{\Lambda}^{\ast}_q)\quad\text{a.s.} \label{eq:final}
\end{equation}
for some $(\bm{\Lambda}^{\ast}_q,\bm{\Psi}^{\ast}_q) \in \Theta^*_q $.  Therefore, we have
$$
\mathop{\lim}_{n \rightarrow \infty} P_{\rm qmin}(\hat{\bm{\Lambda}}_{\rho_n})\le 
P_\mathrm{qmin}(\bm{\Lambda}^{\ast}_q)\quad\text{a.s.}
$$
As shown in (\ref{strong consistency}), $\lim_{n\rightarrow \infty} d(\hat{\bm{\theta}}_{\rho_n},\Theta_\ast)=0\;\;\text{a.s.}$, and $\bm{\Lambda}^{\ast}_q$ is a minimizer of $P_{\rm qmin}(\cdot)$ over $\Theta_*$, so that the proof is complete.

\section{Construction of the varimax penalty}
\renewcommand{\theequation}{B\arabic{equation}}
\setcounter{equation}{0}
The varimax criterion \citep{kaiser1958varimax} is expressed by
\begin{eqnarray*}
	Q(\bm{\Lambda}) &=& \sum_{k = 1}^m  \sum_{i = 1}^p \left\{ \lambda_{ik}^2 - \frac{1}{p} \left( \sum_{i = 1}^p \lambda_{ik}^2 \right) \right\}^2 = \sum_{k=1}^m \left\{ \sum_{i = 1}^p \lambda_{ik}^4 - \frac{1}{p} \left( \sum_{i = 1}^p \lambda_{ik}^2 \right)^2 \right\}. \label{varimax}
\end{eqnarray*}
However, we cannot directly apply the varimax rotation criterion $Q(\bm{\Lambda})$ as the penalty function $P(\bm{\Lambda})$, because the varimax criterion must be {\it maximized} under some constraint.  In other words, if the varimax criterion is used as a penalty of the penalized factor analysis, it must be
\begin{equation}
	\ell_{\rho}(\bm{\Lambda},\bm{\Psi}) = \ell(\bm{\Lambda},\bm{\Psi}) - \rho Q(\bm{\Lambda}). \label{varimax direct}
\end{equation}
It is easily shown that $Q(a\bm{\Lambda}) > Q(\bm{\Lambda})$ for any $a > 1$.  Thus, (\ref{varimax direct}) implies the estimate of factor loadings increase as $\rho$ increases.  Estimating coefficients that are too large are opposed to the basic concept of the penalization procedure; the penalization procedure usually shrinks some coefficients toward zero to produce stable estimates.

In order to overcome this problem, we consider the equivalent minimization problem of the varimax criterion.
\begin{eqnarray*}
	\sum_{k = 1}^m  \sum_{i=1}^p \lambda_{ik}^4 &=& \sum_{k = 1}^m \sum_{l = 1}^p \sum_{i = 1}^p \lambda_{ik}^2 \lambda_{il}^2 - \sum_{k = 1}^m \sum_{l\ne k}^p \sum_{i = 1}^p \lambda_{ik}^2 \lambda_{il}^2\\
	&=& \sum_{i = 1}^p \left( \sum_{k = 1}^m \lambda_{ik}^2 \right) \left( \sum_{l = 1}^p \lambda_{il}^2 \right) - \sum_{k = 1}^m \sum_{l\ne k}^p \sum_{i = 1}^p \lambda_{ik}^2 \lambda_{il}^2
\end{eqnarray*}
Here, the value of $\sum_{k = 1}^m \lambda_{ik}^2$ is invariant with respect to the orthogonal rotation. Therefore, maximization of (\ref{varimax}) over all loading matrices of the maximum likelihood estimate is equivalent to the minimization of the following function:
\begin{equation}
P(\bm{\Lambda}) = \sum_{k = 1}^m  \sum_{l \ne k} \sum_{i = 1}^p \lambda_{ik}^2\lambda_{il}^2 + \frac{1}{p}  \sum_{k = 1}^m  \left( \sum_{i =1 }^p \lambda_{ik}^2 \right)^2. \label{varimax_adj}
\end{equation}
We may use (\ref{varimax_adj}) as a penalty function of the penalized factor analysis.  

\section{Update equation via the coordinate descent algorithm}
\renewcommand{\theequation}{C\arabic{equation}}
\setcounter{equation}{0}
Let $\tilde{\bm{\lambda}}_{i}^{(j)}$ be a ($m-1$)-dimensional vector $( \tilde{\lambda}_{i1},\tilde{\lambda}_{i2},\dots,\tilde{\lambda}_{i(j-1)},\tilde{\lambda}_{i(j+1)},\dots,\tilde{\lambda}_{im})^T$.  The parameter $\lambda_{ij}$ can be updated by maximizing (\ref{ECL}) with the other parameters $\tilde{\bm{\lambda}}_{i}^{(j)}$ and with $\bm{\Psi}$ being fixed, that is, we solve the following problem:
\begin{eqnarray}
\tilde{\lambda}_{ij} &=& {\rm arg} \min_{\lambda_{ij}}  \frac{1}{2\psi_i} \left\{a_{jj}\lambda_{ij}^2  - 2\left(b_{ij} - \sum_{k \ne j} a_{kj} \tilde{\lambda}_{ik} \right)\lambda_{ij} \right\}  \cr
&& + \rho   \left[ \left\{ \frac{1}{2} (1-\gamma) \sum_{k \neq j} \tilde{\lambda}_{ik}^2\right\}\lambda_{ij}^2 + \left( \gamma  \sum_{k \neq j} |\tilde{\lambda}_{ik}| \right)|\lambda_{ij}|  \right]  \cr
 &=& {\rm arg} \min_{\lambda_{ij}}  \frac{1}{2\psi_i} \left\{(a_{jj}+\beta)\lambda_{ij}^2  - 2\left(b_{ij} - \sum_{k \ne j} a_{kj} \tilde{\lambda}_{ik} \right)\lambda_{ij} \right\}  + \rho\xi|\lambda_{ij}| \cr
&=& {\rm arg} \min_{\lambda_{ij}}\frac{1}{2}   \left( \lambda_{ij}  - \frac{b_{ij} - \sum_{k \ne j} a_{kj} \tilde{\lambda}_{ik} }{a_{jj} + \beta} \right)^2  + \frac{\psi_i\rho\xi}{a_{jj}+\beta}  |\lambda_{ij}|. \label{lambdaupdate}
\end{eqnarray}
where
\begin{eqnarray*}
\beta &=& \rho\psi_i(1-\gamma) \sum_{k \neq j} \tilde{\lambda}_{ik}^2, \cr
\xi &=& \gamma  \sum_{k \neq j} |\tilde{\lambda}_{ik}|.
\end{eqnarray*}
This is equivalent to minimizing the following penalized squared error loss function
\begin{equation*}
S(\tilde{\theta}) = {\rm arg} \min_{\theta} \left\{ \frac{1}{2}(\theta - \tilde{\theta})^2 + \rho^* |\theta| \right\}. \label{lamdba_update_CD}
\end{equation*}
The solution $S(\tilde{\theta})$ can be expressed in a closed form using the following soft thresholding function.
\begin{equation*}
S(\tilde{\theta})= {\rm sgn}(\tilde{\theta}) (|\tilde{\theta} |- \rho^*)_+, \label{uelasso}
\end{equation*}
where $A_+ = \max(A,0)$.

\bibliographystyle{rss}
\bibliography{paper-ref}

\end{document}